\def\g{\mathfrak g} 
 \def\R{\mathbb R}
  \def\G{\mathbb G}
\def\bP{\mathbb P}
\def\C{\mathbb C}
 \def \V {\mathbb V}
 \def \Z {\mathbb Z}
\newcommand{\Omu}{{\mathcal O}_{\lambda}}
\def\eps{\epsilon}
\def \nbhd { neighborhood }
\def\Ri{ Riemannian}
\newcommand{\beq}{\begin{equation}}
\newcommand{\eeq}{\end{equation}}
\newcommand{\Leq}[1]{\label{#1}\end{equation}}
\newcommand{\dd}[2] { {{\partial #1}   \over {\partial #2}} }
\newtheorem{theorem}{Theorem}
\newtheorem{corollary}{Corollary}
\newtheorem{lemma}{Lemma}
\newtheorem{remark}{Remark}
\newtheorem{case}{Case}
\newtheorem{definition}{Definition}
\DeclareMathOperator{\tr}{\mathrm{tr}}
\DeclareMathOperator{\rank}{\mathrm{rank}}
\title{The $N$-body problem on coadjoint orbits}
\author{Holger Dullin \& Richard Montgomery}
\date{starting date: May 2021; finished: January 2025? }
\begin{document}

\begin{abstract}

 We show (Theorem 3)  that the symplectic reduction of the spatial $n$-body problem at non-zero angular momentum
 is a singular symplectic space consisting of two symplectic  strata, one for spatial motions
 and the other for planar motions.  Each stratum is realized as   coadjoint
 orbit  in the dual of the Lie algebra of the linear symplectic group $Sp(2n-2)$.  The planar
 stratum arises as the frontier upon   taking  the closure of the spatial stratum. 
 We  reduce by   going to center-of-mass
 coordinates  to reduce by translations and boosts 
 and then performing   symplectic reduction with respect to the orthogonal group $O(3)$.  
 The theorem is a special case of a general  theorem (Theorem 2)   which holds for the $n$-body problem
 in any dimension $d$.  This theorem  follows largely from a  ``Poisson reduction'' theorem, 
Theorem 1.    We achieve our reduction  theorems by combining  the   Howe  dual pair perspective of reduction   
 espoued in  \cite{LMS93} with a normal form arising from a  symplectic singular value decomposition due to Xu \cite{Xu}.
 We begin  the paper   by showing how Poisson  reduction by the Galilean group rewrites Newton's
 equations for the $n$-body problem as a Lax pair.  In section 6.4 we show that this Lax pair representation
 of the $n$-body equations is equivalent to the Albouy-Chenciner \cite{AlbouyChenciner98} representation in terms
 of symmetric matrices.  
 
\end{abstract} 

\maketitle

\section{Introduction}  Newton's $n$-body problem  has 
the Galilean group  as a symmetry group.  As a result  we can push  the differential equations
 defining  the   problem  down to  reduced equations on a kind of   quotient space of  $n$-body   phase space
 by the Galilean group.   We   show here that  these reduced equations  form   a  Lax pair:
  \beq
\dot K = [P, K], \qquad K,  P  \in \g = sp(2n-2). 
\Leq{Lax}
and that these in turn are  equivalent to the   Albouy-Chenciner version of the 
 reduced N-body equations described  in their celebrated paper \cite{AlbouyChenciner98}.  
 
 The evolving matrix $K(t) \in \g = sp(2n-2)$ of \eqref{Lax} encodes
 the  ``Galilean shape'' of  a  moving phase point.    The Lie algebra  $\g$  
consists of $2n -2 \times 2n -2$ real square matrices $K$
satisfying $KJ + J K^t = 0$ where $J$ is the constant  $(2n-2) \times (2n-2)$ matrix representing
the standard symplectic form on $\R^{2n-2}$ and described in 
equation   \eqref{defining P} below. 
The  quotient space is a conjugation invariant convex subset of   $\g$.

The   matrix function $P = P(K) \in \g$ occuring in \eqref{Lax}  corresponds to the
Hamiltonian vector field.  $P$ is given by equation \eqref{defining P} below
and   depends on the masses of the bodies and the potential which defines the problem,  assumed to be
 a function of the mutual  distances between bodies alone.


{\it Duality.} It will be essential   to  view the Lax pair equations \eqref{Lax} as being a   Lie-Poisson equations on
the dual space  $\g^*$ of $\g$.
 Being a simple Lie algebra,    $\g$ is isomorphic to   $\g^*$ as a $G = Sp(2n-2)$-space.
 This  isomorphism   is unique up to scale and   we  call it  the Killing isomorphism. 
 The isomorphism   takes any Lax pair equation on $\g$  to  a  Lie-Poisson equation on $\g^*$.
 The former equations automatically have the adjoint orbits as invariant submanifolds
 while the latter set of equations   have the corresponding image coadjoint orbits of $\g^*$
 as invariant  symplectic submanifolds. 
 
 {\it Symplectic Reduction and New Results.}  Our main result, theorem \ref{symp reduction},  relates certain  singular symplectic spaces  consisting of the closure of certain  coadjoint orbits  in $\g^*$ to the (Marsden-Weinstein-Meyer) 
   symplectic reduced spaces for the $n$-body problem, reduced at certain  values of angular momentum.  In particular, we show 
 in theorem \ref{thm: spatial reduced space} that the  symplectic reduction
 of the spatial $n$-body problem with respect to the orthogonal group $O(3)$ at nonzero angular momentum consists of a singular
 space which is the union of two $\g^*$ coadjoint orbits.   We obtain these   results
 by combining the   dual pair perspective of \cite{LMS93}  (sections 4 and 5 therein), 
 with a matrix normal form   due to Xu \cite{Xu}.  We summarize the  dual pair perspective
by  diagram \eqref{structure} below.  We describe the matrix normal form of Xu in Appendix B.

\subsection{Forming the quotient.}
Newton's equations for $n$ bodies moving in Euclidean space $\R^d$ form  a set of $d \times n$ second order differential equations which we   can   rewrite as a system of  first order Hamiltonian equations on 
a phase space $P_{d,n}$ of dimension $2nd$. 
 To get from $(q,p) \in P_{d,n}$ to the matrix  $K$ of \eqref{Lax} 
  follow the diagram down to $G(Z) = Z^t Z $ and set $K = JG$.  
   \beq
\xymatrix{ 
 & P_{d,n}  \ar[d]^{\pi_{trans}}   &  \\
& P_{d,n-1}  \ar[dl]^{L}  \ar[dr] ^{G}& \\
o(d)^* & &  sp(2n-2)^* 
} 
\qquad 
\xymatrix{ 
 & (q,p)  \ar[d]^{\pi_{trans}}   &  \\
& Z = (X, Y)  \ar[dl]^{L}  \ar[dr] ^{G}& \\
 Z J Z^t & &   Z^t Z 
} 
\Leq{structure}
The intermediate object  $Z \in P_{d, n-1}$ is a $d \times 2n-2$ matrix
  representing the phase point $(q,p) \in P_{d, n}$ after quotient by Galilean boosts and translations. 
  We implement this quotient by the usual trick of going into center-of-mass frame and then using Jacobi vectors. 
The  quotient map $\pi_{trans} : P_{d,n} \to P_{d, n-1}$  is a surjective linear map which implements
the  quotient.
See  section \ref{ss: translation reduction} for details.  
  
 \begin{remark}  $P_{d, n-1}$ is not 
 literally a quotient by Galilean boosts and translations
 because the   boosts do not act on phase space in the usual sense.
 Boosts act on Galilean   space-time and their  action  requires
 explicit knowledge of the current time, while the phase spaces
 keep no knowledge of clock time.  Nevertheless, the usual center-of-mass
 trick allows us to non-ambiguously identify $P_{d,n-1}$ with the quotient
 of $P_{d,n}$ by the action of the Galilean boosts and translations of $\R^d$.
 \end{remark}   

The orthogonal group $O(d)$ is the subgroup of the Galilean group which fixes the center of mass $0 \in \R^d$
and  acts on centered phase space $P_{d, n-1}$, with $g \in O(d)$ acting by $Z \mapsto gZ$. 
The Gram matrix
\beq
G = Z^t Z  \in symm(2n-2)
\Leq{Gram}
is invariant under this action, and is a complete invariant. Here   $symm(2n-2)$ denotes the vector space of all     symmetric $(2n-2) \times (2n-2)$  matrices.  
The  image  of the Gram map represents   the full quotient of $P_{d,n}$ by the Galilean group.
 See the end of section \ref{s: dual pairs} and
section \ref{ss: rotation reduction}  for more.

 We need to  explain how  we  think of $G = G(Z)$ as an element of  $\g = sp(2n-2)$
 and of  $\g^* = sp(2n-2)^*$.
Multiply $G$ by the previously mentioned  symplectic structure  $J$  to arrive at 
 \beq
symm(2n-2) \cong sp(2n-2) \qquad;  \qquad G \to JG := K \in \g .   
\Leq{G to K}
The  matrix   $K = J G = JG(\pi_{trans}(q,p))$ is the matrix appearing in equation \eqref{Lax}.
And if $q(t), p(t)$ satisfies Newton's equations then $K(t)$ satisfies \eqref{Lax}.
When we want to view $G$ or $K$ as lying in $\g^*$
use  the Killing isomorphism
$\g \to \g^*$ mentioned earlier.

Matrices of the form \eqref{Gram} are not arbitrary symmetric matrices, rather they  are   positive semi-definite.  
Moreover,   $rank(G) \le d$ since $rank(Z) \le d$.   Write 
$symm_+(2n-2;d)$ for the subset of  $symm(2n-2)$ consisting of  positive semi-definite matrices
  of rank $d$ or less.   Then 
\beq
im(G) = symm_+(2n-2; d)   \subset  symm(2n-2) \cong \g^*
\Leq{symm}
 is  the  image of $G$. 
 
 The Lie group $Sp(2n-2)$ whose Lie algebra is $\g$ acts on $P_{d, n-1}$ by {\it right multipication}
 $Z \mapsto Z T, T \in Sp(2n-2)$.   Essential to what follows is that the Gram map is the momentum map
 for this action after making the identifications $symm(2n-2) \cong sp(2n-2)^*$ just
 described.    As a consequence   $G$ is a Poisson map.
 Putting   this together with some understanding of Poisson maps we have:  
\begin{theorem}
The quotient space  of   $n$-body phase space $P_{d, n}$ by the action of the Galilean
group is the Poisson variety   $symm_+(2n-2;d) \subset sp(2n-2)^* = \g^*$
with corresponding quotient map   $G \circ \pi_{trans}$.
The isomorphism $\g \to \g^*$ takes the   Lax pair
equations \eqref{Lax} to  Lie-Poisson equations on $symm_+(2n-2;d)$,  with Hamiltonian being the Hamiltonian
on centered phase space, understood as a function on the quotient. 
\label{Poiss reduction}
\end{theorem}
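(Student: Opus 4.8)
\emph{Strategy.} I would reduce in the two stages of diagram \eqref{structure}: first kill translations and boosts by the linear projection $\pi_{trans}:P_{d,n}\to P_{d,n-1}$, then kill rotations by the Gram map $G:P_{d,n-1}\to symm(2n-2)$. The one structural fact that does all the Poisson-geometric work is that $G$ is a momentum map for a symplectic action of $Sp(2n-2)$ on the symplectic manifold $P_{d,n-1}$, together with the two generalities that momentum maps into $\g^*$ are Poisson maps and that Poisson maps carry Hamiltonian flows to Hamiltonian flows. For the underlying set-theoretic statement: by the center-of-mass-and-Jacobi-vectors construction of section~\ref{ss: translation reduction}, $P_{d,n}$ is a symplectic product of $P_{d,n-1}$ with a center-of-mass factor and $\pi_{trans}$ is the projection onto $P_{d,n-1}$, so $\pi_{trans}$ is a surjective submersive Poisson map realizing the quotient by boosts and translations. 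The residual symmetry on $P_{d,n-1}$ is $O(d)$ acting by $Z\mapsto gZ$; the Gram map $G(Z)=Z^tZ$ is $O(d)$-invariant, its fibers are exactly the $O(d)$-orbits (two $d\times(2n-2)$ matrices with the same Gram matrix differ by an element of $O(d)$), and its image is $symm_+(2n-2;d)$ because a symmetric matrix equals $Z^tZ$ for some $Z$ with $d$ rows iff it is positive semi-definite of rank at most $d$. Hence $G\circ\pi_{trans}$ presents $symm_+(2n-2;d)$ as the full Galilean quotient of $P_{d,n}$; and $symm_+(2n-2;d)$ is a Poisson subvariety of $\g^*$ because the coadjoint action of $Sp(2n-2)$ on $symm(2n-2)\cong\g^*$ is $G\mapsto TGT^t$, which preserves positive semi-definiteness and rank, so $symm_+(2n-2;d)$ is a union of coadjoint orbits, i.e. of symplectic leaves.

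\emph{The momentum map.} Next I would check that $Sp(2n-2)$ acts on $P_{d,n-1}$ on the right by $Z\mapsto ZT$ and that this action is symplectic, since $T^tJT=J$ preserves the form $\omega(Z_1,Z_2)=\tr(Z_1JZ_2^t)$. Computing the Hamiltonian generating the infinitesimal action of $\xi\in sp(2n-2)$: with $X_\xi(Z)=Z\xi$ and the identity $\xi^tJ=-J\xi$ one finds $\iota_{X_\xi}\omega=dh_\xi$ with $h_\xi(Z)=\tfrac12\tr(\xi\,JG(Z))=\tfrac12\tr(\xi K(Z))$. Under the identification $symm(2n-2)\cong\g$ of \eqref{G to K} followed by $\g\cong\g^*$ by a constant multiple of the trace form, this says precisely that $G$ — equivalently $K=JG$ — is the momentum map for the right $Sp(2n-2)$-action, up to the usual overall constant; equivariance $G(ZT)=T^tG(Z)T$ matches the coadjoint action noted above. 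Therefore $G$ is a Poisson map onto $symm_+(2n-2;d)$ with its Lie–Poisson bracket, and so is the composite $G\circ\pi_{trans}$.

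\emph{Transporting the dynamics.} In Jacobi coordinates $Z=(X,Y)$ the $n$-body Hamiltonian reads $H=\tfrac12\tr(Y^tY)+V$ plus a center-of-mass kinetic term killed by $\pi_{trans}$; the potential $V$ depends on the mutual distances, each of which is a function of the $X^tX$-block of $G$, while $\tfrac12\tr(Y^tY)$ is half the trace of the $Y^tY$-block of $G$. So $H=\bar H\circ G\circ\pi_{trans}$ for a function $\bar H$ on $symm_+(2n-2;d)$, and since $G\circ\pi_{trans}$ is Poisson it carries the flow of $H$ to the Lie–Poisson flow of $\bar H$ on $symm_+(2n-2;d)\subset\g^*$: the reduced equations are the Lie–Poisson equations with Hamiltonian $\bar H$. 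Finally, pulling a Lie–Poisson equation on $\g^*$ back along the Killing isomorphism $\g\to\g^*$ produces a Lax equation $\dot K=[P,K]$ with $P$ the Hamiltonian vector field written in $\g$; the identification of $P$ with the explicit matrix $P(K)$ of \eqref{defining P} is the computation already summarized around \eqref{Lax} (see section~\ref{ss: rotation reduction}), which recovers \eqref{Lax}.

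\emph{Main obstacle.} The only genuinely delicate point is the normalization bookkeeping in the momentum-map step: one passes from $symm(2n-2)$ to $\g$ by left multiplication by $J$ and from $\g$ to $\g^*$ by a scalar multiple of the trace form rather than the Killing form, and one must track signs and confirm that it is the \emph{right} action of $Sp(2n-2)$, not the left one, whose momentum map is $G$. These are finite-dimensional linear-algebra verifications rather than conceptual difficulties; granted them, the theorem is a formal consequence of ``momentum maps are Poisson'' together with the elementary invariant theory of the Gram map. The finer description of $symm_+(2n-2;d)$ as a stratified union of specific coadjoint orbits with closure relations is deferred to Theorems~\ref{symp reduction} and \ref{thm: spatial reduced space}; for the present statement only the bare fact that it is an $Sp(2n-2)$-invariant subset of $\g^*$ is needed.
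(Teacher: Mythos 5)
Your proposal is correct and follows essentially the same route as the paper: two-stage reduction via Jacobi vectors and the Gram map, the dual-pair fact that $G$ is the momentum map for the right $Sp(2n-2)$-action (hence Poisson), the first main theorem of invariant theory to identify fibers of $G$ with $O(d)$-orbits, and the transport of the Hamiltonian flow to a Lie--Poisson/Lax equation. You merely make explicit a few verifications (the momentum-map computation, the coadjoint-invariance of $symm_+(2n-2;d)$) that the paper delegates to \cite{LMS93} and to the direct Lax-pair derivation of section~\ref{derivation of eqns}.
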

 
We move on to discuss   the bottom left arrow of   diagram \eqref{structure},
the map $L(Z) = Z J Z^t$.   
  
\subsection{Angular momentum and spectral invariance}
The Lax pair evolution \eqref{Lax}, like any Lax pair evolution,  preserves
the spectrum of the evolving matrix $K$, since the equation says
that  $K$ evolves by conjugation by some time-dependent symplectic matrix.  What is the physical meaning of $K$'s conserved eigenvalues?
The $n$-body problem admits angular momentum $L$ as a conserved quantity.  In terms of $Z$
we  find that 
\beq
L = Z J Z^t \in o(d). 
\Leq{ang mom}
If the bodies are moving  in Euclidean $d$-space then  $L$ takes values in $o(d)$, the Lie algebra of the orthogonal
group $O(d)$ of $d$-space.    To insure that $L(Z) \in o(d)^*$
as per diagram \eqref{structure}, again  identify $o(d)$ with $o(d)^*$ using that Lie algebra's Killing isomorphism.
   The  spectrum of $K$ agrees with the spectrum of $L$.  This fact was observed
 by Albouy and Chenciner \cite{AlbouyChenciner98} and is reproved here as   lemma  \ref{spectral lemma}  of section \ref{s: ang mom}.

 \subsection{Structure of Paper}  
In section \ref{derivation of eqns} we derive the Lax pair equations \eqref{Lax}.  In section \ref{s: ang mom} we describe
the angular momentum $L$  and   its relation to the Gram map $G$ using   matrix language.  Further on, in section \ref{s: AC = us}
we show that the Lax pair equations \eqref{Lax} are the same equations as those derived by Albouy-Chenciner.
In section \ref{s: dual pairs} we set up the main structure of the paper, that of
dual pairs.   The paper is structured around the fact
that $O(d)$ and $Sp(2n-2)$ form a ``Howe dual pair'' whose  momentum maps are  $L$ and $G$,
a fact observed in \cite{LMS93}.  We define dual pairs in our Poisson context in definition \ref{def: dual pair} below.
This dual pair structure leads almost immediately  to theorem \ref{Poiss reduction}
which  identifies  the Galilean quotient of phase space  as   a Poisson subvariety of   $sp(2n-2)^*$,  specifically  as the subvariety
  of all positive semi-definite matrices $G$.   
We then  identify the {\it symplectic reduced spaces} for the $n$-body problem in $d$-dimensions as closures of coadjoint orbits within this subvariety,
see theorem \ref{symp reduction}.  Astronomers are  most  interested in the case $d=3$
of the spatial three-body problem   but our   constructions hold for any $d$.   The incarnation of our general theorem 2 
for the case   $d=3$ of most interest is  detailed as   theorem \ref{thm: spatial reduced space}.
In section  \ref{s: derivation} we derive the starting point equations \eqref{centered eqns of motion} used in section \ref{derivation of eqns}  to derive the Lax pair. 
In section \ref{s: spatial reduction} we give a full proof of theorem \ref{thm: spatial reduced space}. 
Four  appendices are included.  The first two are included  to make the article more  self-contained. The last two   appendices provide
some partial classification  of the coadjoint orbits which arise inside the subvariety and which seem unavailable elsewhere.

\subsection{History and Antecedents}
 
  Albouy and Chenciner  \cite{AlbouyChenciner98} wrote the  reduced $n$-body  equations as  a first order differential equation
 for a  symmetric   $2n-2 \times 2n-2$ matrix which is our $G$.   They
 realized the quotient by the Galiean group in an identical two-step way as  $G \circ \pi_{trans}$.       
In section \ref{s: AC = us} we  show that these
  Lax pair equations are equivalent to the Albouy-Chenciner equations.

Albouy and Chenciner \cite{AlbouyChenciner98}  describe how Lagrange first reduced the spatial three-body problem using invariant polynomials.
Lagrange's  work could be viewed as  the antecedent of everything above.   Albouy and Chenciner go on to reduce the
general $n$-body problem in $d$-dimensions, achieving many of the results described here along with many others.  
Instead of invoking Jacobi vectors, they  
  use the more symmetric language of ``dispositions'' to reduce by translations and boosts, i.e.{} to
achieve the arrow $\pi_{trans}$ and work in centered phase space.

Dual pairs first appeared in representation theory and quantum mechanics.
  See  Howe   \cite{Howe1} and  \cite{Howe2}.
That dual pairs  arise in connection to the classical  $n$-body problem was pointed out in  
\cite{LMS93}.  That paper lacks the first of the two steps refered to above,  the translation-boost reduction  arrow $\pi_{trans}$ of diagram \eqref{structure}.
Subsequently the papers 
\cite{BolsinovBorisovMamaev99},
\cite{Sadetov02},
\cite{Cerkaski03},
\cite{Dullin13}
 use dual pairs in connection with the $n$-body problems.
 General facts about dual pairs for matrix groups as they arise in geometric mechanics  can be found in \cite{Vizman18}.
In \cite{ADN15} the extension from the $n$-body problem to the regularised $n$-body problem is discussed, which leads to the Lie-Poisson structure $u(m,m)$, along with a Lax pair.

\section{Deriving the Lax Pair}  
\label{derivation of eqns}

 Here we derive the Lax pair equation \eqref{Lax}.
 In section \ref{s: derivation} we show that Newton's equations for the $n$-body problem in the center of mass frame 
 can be written in matrix form as
 \beq
\dot Z = - Z P.   
\Leq{centered eqns of motion}
where $Z \in P_{d,n-1}$ is a  $d \times 2n-2$ matrix coordinatizing  `centered' phase space
and   referred to above several times.     
$P$ is the   $2n-2 \times 2n-2$  infinitesimal symplectic matrix occuring in \eqref{Lax}.      

The derivation  of equation \eqref{centered eqns of motion} requires that   the  potential defining the interbody  forces depends
only on the  interbody distances $r_{ab}$.   When we expand   $Z$ into column vectors
\beq
Z = (X, Y) = (X_1, \ldots, X_{n-1}, Y_1, \ldots, Y_{n-1}). 
\Leq{Z}
 its  first $n-1$ columns $X_1, \ldots, X_{n-1}$ are vectors in $\R^d$   which  represent  
the  positions of the  bodies as encoded by  Jacobi vectors  (see Appendix A) while its last $n-1$  columns $Y_1, \ldots,Y_{n-1}$  represent  the conjugate momentum vectors.
The `small Gram matrix' $b=X^t X $ whose $n \choose 2$  entries are the dot products $X_i \cdot X_j$
contains  precisely the same data as the knowledge of the $r_{ab} ^2$.   The small Gram matrix  forms the top left block of
the big Gram matrix \eqref{Gram}. 

  We also  show in  section \ref{s: derivation} that    
\beq P = JS \text{ with } J =\begin{pmatrix} 0 & I \\ -I &0 
\end{pmatrix} \quad \text{ and } \quad S = \begin{pmatrix} \tilde M^{-1} & 0 \\ 0 & \tilde A  \end{pmatrix}\,,
\Leq{defining P}
where the entries of the symmetric matrix $S$ contain first derivatives of the Hamiltonian.
The $(n-1) \times (n-1)$ symmetric matrix $\tilde A$  is a version of what Albouy-Chenciner call the ``Wintner-Conley matrix''
and depends only on  the small Gram matrix $X^t X $.  The  entries of  $\tilde A$ encode first derivatives of the potential with
respect to the  $b_{ij} =  X_i \cdot X_j$.  
The matrix $J$ is the standard symplectic matrix encoding the symplectic structure on
$\R^{2n-2}$. 
Finally $\tilde M$ is the diagonal matrix of reduced masses (see Appendix A). When using normalized Jacobi vectors $\tilde M = I$.

The game is now to differentiate the big Gram matrix with respect to  time,   using equation \eqref{centered eqns of motion} 
and the relation $K = JG$. 
Since  $J^t = - J$ and $J^2 = -I$ while $S^t  = S$,
we have  $P^t = S^t J^t = - S J$.
From $\dot Z = - ZP$ we   get  
$$\dot Z^t = (-P^t Z^t) =  S J  Z^t $$
It follows that 
\begin{eqnarray*}
\frac{d}{dt} JG&= &   J \dot G \\
 &= & J (\dot Z^t Z + Z^t \dot Z) \\
  &= & J((SJ Z^t ) Z + Z^t(-ZP) ) \\
  & = & JS JG -  J Z^t Z P \\
    & = & PK - K P
    \end{eqnarray*}
 which is our Lax pair equation (\ref{Lax}) with  
 $K = JG $.        
    QED

\section{ Angular Momentum and its spectrum}

\label{s: ang mom}

The angular momentum $L = L(Z)$  for a motion with center of mass zero can
be expressed in terms of the  columns $X_i$ and  $Y_i$
  of $Z$ as  
$L(Z) = \sum_i X_i \wedge Y_i$. 
As  elements of $o(d)$ we have $X_i \wedge Y_i = X_i Y_i ^t - Y_i X_i ^t$.
A simple computation now yields equation \eqref{ang mom} above for angular momentum. 
   Note that
$tr( L^2) = tr(Z J Z^t Z J Z^t) = tr (J Z^t Z J  Z^t Z) = tr(K^2) $ where $K = JG$ as above.
More generally we have: 
\begin{lemma}[Spectral Lemma] 
\label{spectral lemma}
$\tr L^k = \tr K^k$, $k = 0, 1, 2, \ldots$.
Consequently the spectrum of $L$ and $K$ agree except possibly for
the eigenvalue $0$ and its multiplicity. 
\end{lemma}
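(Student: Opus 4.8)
The plan is to read off both $L$ and $K$ as the two ``opposite products'' of one and the same pair of rectangular matrices, and then invoke the standard fact that such a pair of products has identical nonzero spectrum. Concretely, by \eqref{ang mom} together with $K = JG = JZ^tZ$, set
\beq
U := Z \ \ (\text{a } d \times (2n-2) \text{ matrix}), \qquad V := JZ^t \ \ (\text{a } (2n-2) \times d \text{ matrix}),
\eeq
so that $L = UV$ while $K = VU$. Everything then reduces to comparing $\tr\bigl((UV)^k\bigr)$ with $\tr\bigl((VU)^k\bigr)$.

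First I would record the elementary identity $\tr\bigl((UV)^k\bigr) = \tr\bigl((VU)^k\bigr)$ for every integer $k \ge 1$. Indeed $(UV)^k = U\,(VU)^{k-1}\,V$, and cyclicity of the trace gives $\tr\bigl(U (VU)^{k-1} V\bigr) = \tr\bigl((VU)^{k-1}VU\bigr) = \tr\bigl((VU)^k\bigr)$; this is exactly the manipulation already carried out in the text for $k=2$, now iterated. Substituting $U = Z$, $V = JZ^t$ yields $\tr L^k = \tr K^k$ for all $k \ge 1$. For $k = 0$ the two sides are merely the dimensions $d$ and $2n-2$ of the respective ambient spaces; since only the nonzero eigenvalues can possibly match, I would read the statement as concerning $k \ge 1$, the zeroth power sum carrying no information relevant to the conclusion.

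Finally I would deduce the spectral assertion from the equality of all positive power sums. Over $\C$, let $\lambda_1, \dots, \lambda_r$ (resp.\ $\mu_1, \dots, \mu_s$) be the nonzero eigenvalues of $L$ (resp.\ $K$), listed with algebraic multiplicity. Then $\sum_i \lambda_i^k = \tr L^k = \tr K^k = \sum_j \mu_j^k$ for every $k \ge 1$, and the generating identity $\sum_{k \ge 1} \bigl(\tr L^k\bigr) t^k = \sum_i \tfrac{\lambda_i t}{1 - \lambda_i t}$, valid for small $t$, recovers the multiset $\{\lambda_i\}$ as the reciprocals of the poles counted with order; hence $\{\lambda_i\} = \{\mu_j\}$ as multisets. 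Equivalently, one may invoke Newton's identities to pass from the power sums to the coefficients of the characteristic polynomials of $L$ and $K$, which then coincide up to a power of $\lambda$.

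There is no substantial obstacle here: the lemma is essentially a repackaging of the $UV \leftrightarrow VU$ principle applied to $U = Z$, $V = JZ^t$, which the authors have already used for $k=2$. The only points needing a moment's care are the bookkeeping that $L$ and $K$ act on spaces of different dimensions, so one should not expect the eigenvalue $0$ or its multiplicity to transfer --- exactly the caveat in the statement --- and the routine justification that equality of all power sums $\tr(\,\cdot\,)^k$, $k \ge 1$, forces equality of the nonzero spectra, for which the generating-function or Newton's-identity argument above suffices.
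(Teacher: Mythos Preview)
Your proof is correct and follows essentially the same route as the paper: the paper also writes $L^k = (ZJZ^t)\cdots(ZJZ^t)$ and cycles the leading $Z$ to the end to obtain $\tr K^k$, which is precisely your $UV \leftrightarrow VU$ manipulation with $U=Z$, $V=JZ^t$; for the spectral consequence the paper appeals to Jordan normal form to pass from power sums to the characteristic polynomial, whereas you use the generating function / Newton's identities --- equivalent arguments. Your remark that the $k=0$ case is vacuous (the two sides being $d$ and $2n-2$) is well taken and worth noting.
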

\begin{proof} 1. 
When expanding out $L^k = (ZJ Z^t) (ZJ Z^t) \ldots (ZJZ^t)$
use $tr(AB) = tr(BA)$ to move the first $Z$ to the end of the line.
Thus
\[
\tr L^k = \tr (ZJZ^t ) (Z J Z^t)\dots (ZJZ^t) 
= \tr JZ^t ZJZ^t \dots ZJZ^tZ
= \tr K^k 
\]
holds for any positive integer $k$.  

For the second part of the lemma, observe that the powers of traces of a square matrix 
determine the coefficients of its characteristic polynomial.  This fact is easily seen by putting
the matrix into Jordan normal form.   Then use Jordan normal form again to observe
that two square matrices of different sizes have the same eigenvalues precisely when
their spectra agree except for possibly the occurrence and multiplicity of zero.
\end{proof}

We give a second proof of the spectral lemma in Appendix B. 

\begin{remark}
The trace of an antisymmetric matrix like  $L$  vanishes for odd $k$.
It follows that   $\tr K^k = 0$ for odd $k$ as well. 
\end{remark} 
\begin{remark} For the two-body problem in space we have
$n =2$ and $d =3$.  The only nontrivial identity expressed in the lemma is sometimes called   Lagrange's (vector) Identity 
$(X \times Y)^2 = |X|^2 |Y|^2 - (X \cdot Y)^2$.
So the theorem can be considered to be a  generalisation of
Lagrange's identity to the  ``$k$th power''  of $2n-2$ vectors in $\R^d$.
\end{remark}

\section{Structure and Dual Pairs}
\label{s: dual pairs}  



The symplectic vector   spaces in the diagram \eqref{structure} are  of the form  
\beq P_{d, m} = \R^d \otimes \R^{2m}.
   \Leq{} 
The factor $\R^d$ represents the  Euclidean vector space in which the bodies move and so comes
with the standard action of $O(d)$.  
The  second    factor $\R^{2m} = \R^m \oplus (\R^m)^*$  is  a symplectic
vector space with symplectic form denoted $\omega_m$  and so comes     with an action of the symplectic group $Sp(2m)$.  
 To construct the symplectic form $\omega$ on  $P_{d,m}$
combine  the  Euclidean structure `$\cdot$' on $\R^d$ and symplectic structure
  on $\R^{2m}$  to define   $\omega(v_1 \otimes w_1,  v_2 \otimes w_2) = (v_1 \cdot v_2)  \omega_m (w_1, w_2)$  
The groups  $O(d)$ and $Sp(2m)$ act on $P_{d,m}$ in a Hamiltonian  manner
and the  actions commute.   

The momentum maps for these actions when $m = n-1$
are our basic maps $L$ and $G$ of \eqref{ang mom} and \eqref{Gram}.
This fact is shown in \cite{LMS93} or can be verified by hand. 
Now $O(d)$ and $Sp(2m)$ acting on $P_{d, m}$ as above form one of the  classic examples of a Howe dual pair.
As a   result their momentum maps $L$ and $G$ form a dual pair in the following 
Poisson sense. 
   \begin{definition} 
   \label{def: dual pair} A pair  $f, g$ of Poisson maps
 \[  \xymatrix{ M
 & P  \ar[r]^{f}   \ar[l]^{g}& N   } \]
   forms a dual pair if $f ^* C^{\infty} (M)$ and $g^* C^{\infty} (N)$
   are each other's commutants within $C^{\infty} (P)$.
   \end{definition}
   Each map  $L$ and $G$ plays two roles: as  the momentum map for  ``its''  group action and as    the  quotient map for the ``other'', or dual  group action.
   $L$,  the angular momentum,  is the momentum map for the action of
  $O(d)$ on centered phase space.  And $L$     implements the quotient map for the action of $Sp(2n-2)$.
 $G$   implements the quotient map for the $O(d)$ action  
  and is the momentum map for the $Sp(2n-2)$ action on centered phase space.
  These facts are verified in section 5 of \cite{LMS93}. 
  
  Not only are $G$ and $L$   invariants for their `dual' groups, 
but they   represent  ``complete  invariants''.  For example, the components of $G$ are
$G_{ij} = Z_i \cdot Z_j$ where $Z_1, \ldots, Z_{2n-2} \in \R^d$ are the column vectors of $Z$.
What Roger Howe has called ``the first main theorem of invariant theory''
 asserts that any polynomial function on $P_{n,d-1}$ 
which is invariant under $O(d)$ can be expressed as a  polynomial in the  
$G_{ij}$.  This implies that $G(Z) = G(Z')$ if and only if $Z' = gZ$
for some $g \in O(d)$.   So the image of $G$ represents the quotient space $P_{n, d-1}/O(d)$ and  $G$
realizes  the quotient projection.
(For a full and careful statetement of this first main theorem of invariant theory  see \cite{Weyl}, theorem (2.9.A)
and the discussions around sections 4 and 5 of  \cite{LMS93}.)

\section{Symplectic Reduction and coadjoint orbits.}
\label{s: symp reduc} 

The   relationship between $O(d)$ and $Sp(2n-2)$ and the dual nature of their momentum maps 
lets us   identify a symplectic reduced space for  one group
with the closure of a coadjoint orbit for  the other.  Here is the prescription  relevant for us.
\begin{lemma} 
The $O(d)$ symplectic reduced space for $P_{d,n-1}$ at $\mu = o(d)^*$
is equal to $G(L^{-1}(\mu))$, viewed as a singular Poisson variety in $\g^*$
\label{reduc lemma}
\end{lemma}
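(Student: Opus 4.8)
The plan is to reduce the statement to the now-standard fact that for a Howe dual pair the symplectic reduction at a momentum value on one side is realized by the momentum map of the other side restricted to the corresponding level set. Concretely, let $\mu \in o(d)^*$ and consider the coisotropic(ally embedded) level set $L^{-1}(\mu) \subset P_{d,n-1}$. By the Marsden--Weinstein--Meyer construction the $O(d)$-reduced space at $\mu$ is $L^{-1}(\mu)/O(d)_\mu$, where $O(d)_\mu$ is the coadjoint stabilizer. So the task is to show $G$ descends to a bijection $L^{-1}(\mu)/O(d)_\mu \to G(L^{-1}(\mu))$ and that this bijection is a Poisson isomorphism onto the image, the image carrying the Poisson structure inherited from $\g^* = symm(2n-2)$.

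First I would verify that $G$ is constant on $O(d)$-orbits inside $L^{-1}(\mu)$ and in fact separates them: this is exactly the "complete invariant" property recalled at the end of section \ref{s: dual pairs}, namely $G(Z) = G(Z')$ iff $Z' = gZ$ for some $g \in O(d)$. Hence $G$ restricted to $L^{-1}(\mu)$ factors through the \emph{full} $O(d)$-quotient, not just the $O(d)_\mu$-quotient; but since $L$ is $O(d)$-equivariant, the fiber $L^{-1}(\mu)$ is preserved only by $O(d)_\mu$, so the $O(d)$-orbits meeting $L^{-1}(\mu)$ are the same as the $O(d)_\mu$-orbits in $L^{-1}(\mu)$. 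This gives the bijection $L^{-1}(\mu)/O(d)_\mu \cong G(L^{-1}(\mu))$ at the level of sets. Second, I would identify the reduced Poisson structure: away from singular strata the reduced symplectic form is characterized by $\pi^* \omega_{red} = \iota^* \omega$ where $\iota: L^{-1}(\mu) \hookrightarrow P_{d,n-1}$ and $\pi: L^{-1}(\mu) \to L^{-1}(\mu)/O(d)_\mu$; on the other side, because $G$ is the momentum map for the $Sp(2n-2)$-action (stated in section \ref{s: dual pairs}), it is a Poisson map, so the coadjoint (Lie--Poisson) bracket on $G(L^{-1}(\mu)) \subset \g^*$ pulls back to the Poisson bracket of $O(d)_\mu$-invariant functions on $L^{-1}(\mu)$. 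Matching these two descriptions — Poisson-reduced bracket from the left, restricted Lie--Poisson bracket from the right — is the content of the lemma.

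The cleanest way to make the matching precise, and the step I expect to be the main obstacle, is the singular/stratified bookkeeping: $L^{-1}(\mu)$ need not be a manifold, $O(d)_\mu$ acts with several orbit types, and $G(L^{-1}(\mu))$ is a union of pieces of coadjoint orbits of differing ranks, so one must invoke singular symplectic reduction (Sjamaar--Lerman / Ortega--Ratiu) rather than the smooth version. The honest argument is: stratify $P_{d,n-1}$ by $O(d)$-orbit type, apply smooth reduction on each stratum, and check that $G$ glues the reduced strata together into $G(L^{-1}(\mu))$ compatibly with the frontier relations — exactly the stratification the paper exploits in Theorems \ref{symp reduction} and \ref{thm: spatial reduced space}. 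Alternatively, one can bypass stratification by phrasing everything at the level of Poisson algebras: the reduced algebra is $\left(C^\infty(P_{d,n-1})^{O(d)_\mu}\big/ I_\mu\right)$ where $I_\mu$ is the ideal generated by $\langle L - \mu, \xi\rangle$ for $\xi \in o(d)_\mu$, and one shows via the dual pair commutant property (Definition \ref{def: dual pair}) that pullback by $G$ identifies this with $C^\infty$ functions on $G(L^{-1}(\mu))$ with its Lie--Poisson bracket. I would present the Poisson-algebra version as the main line of proof, since the dual pair hypothesis — that $G^* C^\infty(\g^*)$ and $L^* C^\infty(o(d)^*)$ are mutual commutants in $C^\infty(P_{d,n-1})$ — is precisely the algebraic engine that forces the two bracket computations to agree, and relegate the geometric stratified picture to a remark tying back to the later theorems.
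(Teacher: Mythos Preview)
Your proposal is correct and follows essentially the same route as the paper's proof. The paper phrases the set-theoretic bijection via the ``orbit version'' of reduction, namely $L^{-1}(\mu)/G_\mu \cong L^{-1}(\mathcal O_\mu)/O(d)$, and then observes that $G$, being $O(d)$-invariant and a complete invariant, realizes the right-hand quotient with $G(L^{-1}(\mu)) = G(L^{-1}(\mathcal O_\mu))$; your equivariance argument that an $O(d)$-orbit meeting $L^{-1}(\mu)$ intersects it in exactly an $O(d)_\mu$-orbit is the same fact unpacked directly. On the Poisson side the paper is terser than you are --- it simply invokes that $G$ is a Poisson map and defers the singular bookkeeping to \cite{LMS93} --- whereas you sketch both a stratified-geometric and a Poisson-algebraic justification; either is acceptable, and your commutant-based algebraic line is a reasonable amplification of what the paper leaves implicit.
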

This assertion comprises Theorem 4.4 on p. 198 of \cite{LMS93} where the reader will 
 also find more details of the proof than the one we are about to give.  
 The reader  will also find the claim that this image is  the closure of a single coadjoint orbit lying in
 $sp(2n-2)_+$, a fact we will be later proving by hand.
 
 \begin{proof} 
 We recall the construction of an $O(d)$ symplectic reduced space, also known as a Marsden-Weinstein-Meyer reduced space.
Fix a value $\mu$ for $L$.    Form $L^{-1}(\mu) \subset P_{d, n-1}$.  Let $G_{\mu} \subset O(d)$ denote the isotropy
group for $\mu$, meaning the subgroup of all $g \in O(d)$ such that $g \mu g^t = \mu$. 
Then $G_{\mu}$ acts on $L^{-1}(\mu)$ and, in `very nice' situations $L^{-1}(\mu)$ is a smooth manifold,
and this $G_{\mu}$ action  sweeps out the kernel of the ambient symplectic form restricted to
$L^{-1}(\mu)$ and the quotient space becomes a symplectic manifold:  the symplectic reduced space  $L^{-1}(\mu)/G_{\mu}$.   
Our situation is not generally ``very nice''.  We still define the reduced space to be $L^{-1}(\mu)/G_{\mu}$
but it is typically not a smooth manifold.  Rather, the reduced space is a  singular symplectic manifold, meaning an analytic space
which is stratified in the Whitney sense and whose strata
are smooth symplectic manifolds.    See \cite{LMS93} for details. 

There is an alternative  equivalent construction of the reduced space. Let $\Omu$ be the coadjoint orbit through
$\mu$.  Form $L^{-1}(\Omu) = O(d) L^{-1}(\mu) \supset  L^{-1}(\mu)$.  Now $O(d)$ acts on
$L^{-1}(\Omu)$.  Form the  full quotient $L^{-1}(\Omu)/O(d)$ as a topological space.
Verify that  $L^{-1}(\mu)/G_{\mu} = L^{-1}(\Omu)/O(d)$ as topological spaces.  The singular symplectic structures
also agree if we define them properly on $L^{-1}(\Omu)/O(d)$.     Now the Gram map $G$ 
is invariant under the $O(d)$
action  so that we have that $G(L^{-1} (\mu)) = G( L^{-1}(\Omu))$.  Since
the Gram map implements the quotient by $O(d)$ it follows
that the   restriction of $G$ to $L^{-1}(\mu)$  realizes both  $L^{-1}(\Omu)/O(d)$ and  $L^{-1}(\Omu)/G_{\mu}$.
Since $G$ is a Poisson map this  image, endowed with the   Poisson structure
   it inherits   from $sp(2n-2)^*$,   realizes the $O(d)$-symplectic reduced space at  
  angular momentum $\mu$. 
\end{proof}
 
\subsection{The General Case and normal forms}

 We  will   need  a   normal form  to  label and work with  the   coadjoint orbits
 of interest.  For this purpose   we  
 found it  essential to   use yet a third realization of $\g$: as the space of
 quadratic Hamiltonians on $\R^{2n-2}$.  In this incarnation the Lie bracket
 is the Poisson bracket and the coadjoint action 
 is pullback:  
  $\lambda \mapsto \lambda  \circ g^{-1}$ where   $\lambda \in \g \cong \g^*$  and $g \in Sp(2n-2)$.

Words are in order regarding going back and forth between our three representations of $sp(2n-2)$.
 Given a symmetric $2n-2 \times 2n-2$ matrix $G$ we can form the quadratic Hamiltonian  
$$\lambda_G (x, y) =  \frac{1}{2}  ( x, y ) \cdot G (x, y) ^t    $$
where $(x, y) = (x_1, \ldots, x_{n-1}, y_1, \ldots y_{n-1}) \in \R^{2n-2}$ and 
with the $x_i, y_i$ canonically conjugate coordinates on $\R^{2n-2}$.
Thus, for example, if $G_{11} = 1$ while $G_{ij} = 0, i, j \ne 1$ then
$\lambda_G (x,y) = \frac{1}{2} x_1 ^2$.     To return to $G$, given the
quadratic Hamiltonian $\lambda$ we set $G = d^2 \lambda$, the Hessian of $\lambda$.
To form $K = JG$ from $\lambda$ take the Hamiltonian vector field of $\lambda$
so that $K(v) = (J d \lambda ) (v)$ for $v \in \R^{2n-2}$.  In order to reverse and go   from $K$ back to $\lambda$
set $\lambda (v) = \frac{1}{2} \omega (v, Kv)$ where $\omega = \sum dx_i \wedge dy_i$ is
the symplectic form.  

For the rest of this section we use the quadratic Hamiltonian realization of points of $sp(2n-2)$.
We write $\lambda$ for a typical quadratic Hamiltonian.  
We think of $\lambda \in \g^*$ with $\g = sp(2m)$, having 
set
$$m = n-1.$$
The  coadjoint action is by pullback:  $\lambda \mapsto \lambda \circ g$.
Using this action we can bring  any   $\lambda \in sp(2m)^*_+$
 into the normal form:
 \beq
   \lambda =\frac{1}{2}  \sum_{j=1}^p  \omega_j^2 (x_j^2 + y_j^2) + \frac{ 1 }{2} \sum_{p < j \le p+q }  y_j ^2 
 \Leq{normal form},
where the $\omega_j  \ne 0$ and $p+q \le m$.
This fact is proved in Appendix \ref{s: symp norm form}.
\begin{definition} 
We   call the integers $(p,q)$ appearing in the sums
of the normal form \eqref{normal form} the {\it numerical invariants} of $\lambda$.  
We call the $\omega_j ^2$, $j =1, \ldots, p$,
its spectral invariants.  (We insist $\omega_j \ne 0$.)  We call   $2p +q$
the rank of $\lambda$, or sometimes the `rank of the motion space' for $\lambda$. 
\label{def: invariants}
\end{definition}
\noindent The integer $2p$ is the dimension of the largest symplectic subspace of
$\R^{2m}$ on which $\lambda$ is positive definite.  The nonzero numbers 
$\pm i \omega_j ^2$ are the nonzero eigenvalues of  $K \in \g$, the linear symplectic
map corresponding to $\lambda$.     The  spectral invariants   are also the 
nonzero eigenvalues of any angular momentum $L$ coming from the same $Z$,
see lemma  \ref{spectral lemma}.   The integer $q$ corresponds to  the number of  Jordan blocks
for  the generalized eigenvalue $0$ for $K$.  The rank $2p+q$  is the rank of $Z$ and equals
the dimension of the affine space swept out by the Newtonian solution
for  any initial condition $(q, p) \in P_{d, n}$ which has
has normal form $\lambda$ when this affine space is viewed from the center-of-mass frame . 

\begin{lemma} The coadjoint orbit through $\lambda \in  symm(2n-2)$
is  uniquely determined by the numerical invariants
and spectral invariants of $\mu$.  (See definition \ref{def: invariants}.)  
\label{lem: invariants}
\end{lemma}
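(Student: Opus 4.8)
The plan is to show that the normal form \eqref{normal form} is a genuine invariant-labeling of coadjoint orbits: two quadratic Hamiltonians in $sp(2m)^*_+$ lie on the same coadjoint orbit if and only if they can be brought to the \emph{same} normal form. One direction is immediate: if $\lambda$ and $\lambda'$ have the same numerical invariants $(p,q)$ and the same spectral invariants $\{\omega_j^2\}$, then by the normal form theorem of Appendix \ref{s: symp norm form} each is $Sp(2m)$-equivalent to the single expression $\frac12\sum_{j\le p}\omega_j^2(x_j^2+y_j^2)+\frac12\sum_{p<j\le p+q}y_j^2$, hence they are $Sp(2m)$-equivalent to each other, i.e.\ lie on the same coadjoint orbit. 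So the real content is the converse: the numerical and spectral invariants are \emph{constant along a coadjoint orbit}.

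For the converse I would argue that each of the listed quantities is manifestly conjugation-invariant, using the dictionary between the three realizations of $sp(2m)$ set up just before the lemma. The spectral invariants $\omega_j^2$: these are (as noted in the discussion after Definition \ref{def: invariants}) the nonzero eigenvalues of the linear symplectic map $K=JG$ associated to $\lambda$, packaged as $\pm i\omega_j^2$; since the coadjoint action on $\lambda$ corresponds to conjugation $K\mapsto TKT^{-1}$ with $T\in Sp(2m)$, the eigenvalues of $K$ (with multiplicity) are preserved, so $\{\omega_j^2\}$ and $p$ are invariant. The integer $q$: it is the number of Jordan blocks of $K$ for the generalized eigenvalue $0$, a conjugation invariant; equivalently $2p+q=\rank K=\rank G$ is conjugation-invariant and $p$ already is, so $q$ is too. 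Alternatively, and perhaps cleaner, $2p$ can be characterized intrinsically as the maximal dimension of a symplectic subspace of $\R^{2m}$ on which $\lambda$ is positive definite — a condition visibly preserved by pullback by $g\in Sp(2m)$ since $g$ carries symplectic subspaces to symplectic subspaces and preserves positivity — and $2p+q=\rank(d^2\lambda)$ is preserved because $g$ is invertible. Either way, $(p,q,\{\omega_j^2\})$ is a complete list of functions constant on coadjoint orbits that already separates the normal forms.

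Putting the two directions together: the map sending a coadjoint orbit to the tuple of its numerical and spectral invariants is well-defined (converse direction) and injective (forward direction, via the normal form theorem), which is exactly the assertion of the lemma.

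The main obstacle is not any single hard computation but rather bookkeeping: one must be careful that the normal form of Appendix \ref{s: symp norm form} is genuinely unique — that the $\omega_j$ are forced to be the nonzero eigenvalue data and that $q$ cannot be traded off against $p$ or absorbed elsewhere — so that ``same invariants $\Rightarrow$ same normal form'' is airtight. In particular one should note that $\frac12 y_j^2$ summands (the $q$ nilpotent directions) really do contribute rank $1$ each to $d^2\lambda$ but contribute nothing to the eigenvalues of $K$ beyond the generalized $0$-eigenspace, so the split $\rank = 2p+q$ with $p$ determined by the nonzero spectrum of $K$ is unambiguous. Once that uniqueness is pinned down, the rest is the short invariance argument above. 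I would also remark that the lemma as stated writes $\lambda\in symm(2n-2)$ and refers to invariants ``of $\mu$'' — this is a slight notational slip for $\lambda$ (or $\mu$) in the quadratic-Hamiltonian picture with $m=n-1$, and the proof is unaffected.
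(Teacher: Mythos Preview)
Your proposal is correct and matches the paper's approach: the paper does not give an explicit proof of this lemma, instead relying on the surrounding discussion (that the $\omega_j^2$ are the nonzero eigenvalues of $K$, that $q$ counts Jordan blocks for the eigenvalue $0$, and that $2p+q=\rank K$) together with the normal form theorem of Appendix~\ref{s: symp norm form}. You have simply written out explicitly the two-directional argument that the paper leaves implicit, and your observation about the $\lambda$/$\mu$ typo is accurate.
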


 The first term of the
normal form \eqref{normal form}  corresponds to the semi-simple part of $K$ and the second term (if present) 
to its nilpotent part.  We have that  
\beq
rank(G) = rank(K) = 2p +q
\Leq{D =dim}
while
\beq
rank(L) =  2p
\Leq{}

\begin{corollary}  Let $(p,q)$ be the numerical invariants of a coadjoint orbit
in $sp(2m)_+$ and $\omega_j^2, j=1, \ldots, p$ its spectral invariants.  Then: 
\begin{itemize}
\item[(i)]  $q = 0 \iff $ the orbit is closed
\item[(ii)]  $ p=0 \iff L = 0$
\item[(iii)]  $d \le 3 \implies p \le 1$ and  if $p=1$ then  $\omega_1 ^2 = \| L \|^2$.  
\end{itemize}
\end{corollary}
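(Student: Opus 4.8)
The plan is to read off all three parts from the normal form~\eqref{normal form}, the rank identities $\rank(G)=\rank(K)=2p+q$ and $\rank(L)=2p$ recorded just above the corollary, the Spectral Lemma (Lemma~\ref{spectral lemma}), and the uniqueness Lemma~\ref{lem: invariants}; throughout, $K=JG$ denotes the infinitesimal symplectic matrix attached to the normal form $\lambda$. For (i), recall---as noted after Definition~\ref{def: invariants}---that the first sum in~\eqref{normal form} is the semisimple part of $K$ and the second sum is its nilpotent part: on each of the $p$ planes $(x_j,y_j)$ with $j\le p$, $K$ restricts to a rotation-type block with eigenvalues $\pm i\omega_j^2$, and on each of the $q$ planes with $p<j\le p+q$, $K$ restricts to a single nontrivial nilpotent Jordan block. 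Hence $q=0$ if and only if $K$ is semisimple, and (i) is the classical fact that the coadjoint orbit of an element of a reductive Lie algebra is closed exactly when that element is semisimple. We would also argue it by hand, in the style of the rest of the paper. For the ``not closed'' direction ($q\ge1$): let $g_t\in Sp(2m)$ be the linear symplectomorphism sending $(x_{p+1},y_{p+1})\mapsto(t^{-1}x_{p+1},\,t\,y_{p+1})$ and fixing the other coordinate planes; then $\lambda\circ g_t$ simply replaces the term $\tfrac12 y_{p+1}^2$ by $\tfrac12 t^2 y_{p+1}^2$, so $\lim_{t\to0}\lambda\circ g_t$ is the normal form with numerical invariants $(p,q-1)$, which by Lemma~\ref{lem: invariants} lies on a coadjoint orbit distinct from $\mathcal{O}_\lambda$; hence $\mathcal{O}_\lambda$ is not closed. (For $d=3$ this is exactly the degeneration of the spatial stratum onto the planar stratum referred to in the abstract.)

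For the ``closed'' direction ($q=0$): suppose $\lambda_n\to\lambda_\infty$ with each $\lambda_n\in\mathcal{O}_\lambda$. Then $G_\infty\succeq0$ (a closed condition), and the characteristic polynomial of $K_\infty=JG_\infty$ is the constant characteristic polynomial of $K$, so the eigenvalue $0$ of $K_\infty$ has algebraic multiplicity $2m-2p$; combined with lower semicontinuity of rank this forces $\rank K_\infty=2p$, whence $0$ is a semisimple eigenvalue of $K_\infty$. Positive semidefiniteness of $G_\infty$ keeps the nonzero part of $K_\infty$ semisimple as well (writing $G_\infty=B^tB$, the nonzero spectrum of $JB^tB$---Jordan structure included---coincides with that of the real antisymmetric matrix $BJB^t$). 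So $\lambda_\infty$ has the same numerical and spectral invariants as $\lambda$, and $\lambda_\infty\in\mathcal{O}_\lambda$ by Lemma~\ref{lem: invariants}.

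Part (ii) is immediate from $\rank(L)=2p$: indeed $p=0\iff\rank L=0\iff L=0$. For (iii), $L$ is a real antisymmetric $d\times d$ matrix, hence of even rank at most $d$, so $2p=\rank L\le d\le3$ forces $p\le1$ (with $\rank L\in\{0,2\}$ when $d=3$). When $p=1$, the Spectral Lemma together with the identification---noted after Definition~\ref{def: invariants}---of $\omega_1^2$ with the modulus of the unique conjugate pair of nonzero eigenvalues of $L$ reduces the identity $\omega_1^2=\|L\|^2$ to a one-line computation of that eigenvalue for the rank-$2$, at-most-$3\times3$ antisymmetric matrix $L$.

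The only substantial point is the ``closed'' half of (i), i.e.\ the case $q=0$. With the black-box theorem on semisimple orbits it is a single line; done by hand, the delicate step is excluding a Jordan block in the limit matrix $K_\infty$, and that is exactly where positive semidefiniteness of $G$---together with the constancy of the characteristic polynomial along the orbit---is used. The other two parts are immediate consequences of the rank formula $\rank(L)=2p$ and the Spectral Lemma.
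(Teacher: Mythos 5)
Your proposal is correct, and for the parts the paper actually proves it follows the same route: the classed fact that orbits through semisimple elements are closed gives the forward direction of (i), the scaling $x_j\mapsto x_j/\eps,\ y_j\mapsto \eps y_j$ on the nilpotent blocks gives the degeneration showing $q>0\implies$ not closed, and (ii) falls out of the Spectral Lemma (your use of $\rank L=2p$ is the same fact in rank language). You go beyond the paper in two places, both worthwhile. First, you supply a hands-on proof of the closed direction of (i), which the paper takes as a black box: the key observations --- that the constant characteristic polynomial along the orbit pins the algebraic multiplicity of $0$ in any limit $K_\infty$ while semicontinuity of rank pins its geometric multiplicity, and that positive semidefiniteness of $G_\infty=B^tB$ transfers the nonzero Jordan structure of $K_\infty=JB^tB$ to the diagonalizable antisymmetric matrix $BJB^t$ --- are exactly where one must use that the orbit lies in $sp(2m)^*_+$, and the argument is sound. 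Second, you actually prove (iii), which the paper's proof silently omits: the parity-and-bound argument $2p=\rank L\le d\le 3$ is the right one. The only caveat there is notational rather than mathematical: the identity $\omega_1^2=\|L\|^2$ depends on which norm on $o(3)$ is meant (the nonzero eigenvalues of $L$ are $\pm i\,\omega_1^2$ by the Spectral Lemma and the identification of $\pm i\,\omega_j^2$ as the eigenvalues of $K$), so your deferral to "a one-line computation" is as precise as the paper's own statement allows.
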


\begin{proof}
Here is the proof of   (i).  It is well-known that a (co)adjoint orbit through a semi-simple element of
a   matrix Lie group is closed.  Since $q= 0$ corresponds to the normal form being
semi-simple we get $q=0 \implies$ the orbit is closed.  We could use
the same type of argument to show $q > 0 \implies $ not closed, but we
prefer to verify this directly.  Look at the normal form.  Consider the linear symplectic transformation $x_j \to x_j /\eps_j,  y_j \to \eps y_j$
for $j = p+1, \ldots, p+q$, while $x_j \to x_j, y_j\to y_j$ for $j$ outside of  this range.
This transformation takes $h$ to $\frac{1}{2}  \sum_{j=1}^p  \omega_j^2 (x_j^2 + y_j^2) + \frac{ 1 }{2} \sum_{j=p+1}^{p+q} \eps_j ^2 y_j ^2$.
Now let any one of the $\eps_j \to 0$ to see that all the orbits which have the same spectral invariants
but numerical invariants $(p, q')$ with $q' < q$ are in the closure of the  orbit through $\lambda$.

To prove (ii) use the spectral lemma, lemma \ref{spectral lemma}.  

\end{proof}

\vskip .3cm

Specifying the   numerical   and   spectral invariants  
uniquely specifies a coadjoint orbit in $sp(2m)_+$.   The orbit is closed if and only if $q =0$.
In all cases the closure of the coadjoint orbit realizes a  symplectic
reduced space for the $m+1$-body problem in $D$ dimensions.  We reduce
at any  value of angular momentum
$L$  which shares the specified spectral invariants.  
 
 \begin{theorem} 
 \label{symp reduction} Suppose that a coadjoint orbit in $sp(2m)^* _+$ has numerical invariants
 $(p, q)$ and  spectral invariants $\omega_j ^2, j=1, \ldots, p$. 
 Write   $d = 2p +q$ for the rank of the points in this orbit.  Let $L \in o(d)$ be an angular momentum sharing its spectral invariants.  
   If  $q = 0$ then this coadjoint orbit is closed within $sp(2m)^*$ and is realized by
  the   symplectic reduction of the $m+1$-body problem in $\R^d$
 reduced at $L$.  
 Otherwise the orbit's  closure consists of  the union of  $q+1$ coadjoint orbits, namely those
 orbits having the  same semi-simple part but
 smaller Jordan blocks  and so numerical invariants $(p, q')$ with  $q' =0, 1, \ldots, q$,
 the choice $q' = q$ corresponding to the original orbit.  The orbit's  closure 
 realizes the symplectic reduction of the $m+1$-body problem in $\R^d$ reduced at $L$.
  The   $(p, q')$ type orbits with $q' < q$
 correspond to the projections to the   reduced space of phase points $Z \in P_{d, m}$
 with $rank(Z)  = 2p +q' < d$.  
 \end{theorem}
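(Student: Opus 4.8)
The plan is to deduce Theorem~\ref{symp reduction} by combining three ingredients already available: the reduction Lemma~\ref{reduc lemma}, which identifies the $O(d)$-symplectic reduced space at $\mu$ with $G(L^{-1}(\mu))\subset\g^*$; the classification Lemma~\ref{lem: invariants}, which says a coadjoint orbit in $sp(2m)_+$ is pinned down by its numerical invariants $(p,q)$ and spectral invariants $\omega_j^2$; and the explicit degeneration argument in the proof of part~(i) of the Corollary, which shows that letting the $\eps_j\to 0$ drives an orbit of type $(p,q)$ onto orbits of type $(p,q')$ with $q'<q$. So the theorem is essentially an assembly of these pieces, and the work is in checking that the set-theoretic and Poisson-geometric identifications line up.

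First I would fix an angular momentum $L\in o(d)$ with the prescribed spectral invariants $\omega_1^2,\dots,\omega_p^2$ and nothing else nonzero, so that $\rank L=2p$; such an $L$ exists in $o(d)$ precisely because $d=2p+q\ge 2p$. By Lemma~\ref{reduc lemma} the reduced space is $G(L^{-1}(L))$, which by $O(d)$-invariance of $G$ equals $G(L^{-1}(\mathcal O_L))$ where $\mathcal O_L$ is the coadjoint orbit of $L$. The next step is to identify the set of $Z\in P_{d,m}$ with $L(Z)=\mu$ for $\mu$ on this orbit: by the spectral lemma (Lemma~\ref{spectral lemma}) every such $Z$ produces a $K=JG(Z)$ with the same nonzero spectrum as $L$, hence the same $p$ and the same $\omega_j^2$; and $\rank Z$ can be anything from $\rank L=2p$ up to $d=2p+q$, the deficiency $d-\rank Z=q-q'$ governing how many Jordan blocks for eigenvalue $0$ the matrix $K$ carries, i.e.\ $K$ has numerical invariants $(p,q')$ with $0\le q'\le q$. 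Conversely, given any such $(p,q')$ normal form one can realize it by a $Z$ of rank $2p+q'\le d$ whose angular momentum lies on $\mathcal O_L$ — this is the one genuinely constructive check, done by writing down explicit Jacobi vectors spanning a $(2p+q')$-dimensional subspace of $\R^d$. Putting this together, $G(L^{-1}(\mathcal O_L))$ is exactly the union of the coadjoint orbits in $sp(2m)_+$ with numerical invariants $(p,q')$, $q'=0,\dots,q$, and fixed spectral invariants $\omega_j^2$.

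It then remains to see that this union is precisely the closure of the single top orbit (the one with $q'=q$). One inclusion is the $\eps_j\to 0$ degeneration already carried out in the Corollary's proof: each $(p,q')$ orbit with $q'<q$ lies in $\overline{\mathcal O_\lambda}$. For the reverse inclusion I would invoke the standard fact that the closure of an adjoint orbit is a union of orbits of strictly smaller dimension lying in the same ``sheet'' (same semisimple part), together with Lemma~\ref{lem: invariants}: any orbit in $\overline{\mathcal O_\lambda}\cap sp(2m)_+$ must have the same semisimple part, hence the same $(p,\omega_j^2)$, and a $q'\le q$; since these are finitely many orbits we get equality of the closure with the finite union, and in particular the closure meets $sp(2m)_+$ in exactly $q+1$ orbits. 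The case $q=0$ is then the degenerate instance where the union is a single closed orbit, recovering the first sentence of the theorem. Finally, the identification of Poisson/symplectic structures is inherited verbatim from Lemma~\ref{reduc lemma}, since $G$ is a Poisson map and the reduced space carries the quotient structure; no new computation is needed there.

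The main obstacle I expect is the bookkeeping in the middle step: verifying that the full range $q'=0,\dots,q$ is actually attained as $\rank Z$ varies over $L^{-1}(\mathcal O_L)$, and that no \emph{other} numerical type can occur. Surjectivity onto all $q'\le q$ needs an honest construction of phase points $Z$ of each intermediate rank with angular momentum on the prescribed orbit, and one must be careful that adding ``nilpotent directions'' (the $y_j^2$ terms) to $\lambda$ is compatible with $\rank Z\le d$ — this is where $d=2p+q$ is used sharply. Once that range is nailed down, matching it to the list of orbits in $\overline{\mathcal O_\lambda}$ is a clean application of Lemma~\ref{lem: invariants} and the orbit-closure structure, and the rest is formal.
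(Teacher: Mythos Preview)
Your proposal is correct and follows essentially the same route as the paper's proof: invoke Lemma~\ref{reduc lemma} for the identification of the reduced space with $G(L^{-1}(\mu))$, then use the $\eps_j\to 0$ symplectic scaling (exactly as in the Corollary) for one inclusion of the closure and spectral invariance for the other. The paper's argument is terser than yours and does not spell out the rank-stratification of $L^{-1}(\mu)$ in the detail you do; your ``main obstacle'' --- realizing each intermediate rank $2p+q'$ by an explicit $Z$ with the prescribed angular momentum --- is handled in the paper via Xu's symplectic SVD normal form in Appendix~\ref{s: symp norm form}, which you may cite directly rather than reconstruct.
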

 
 \begin{proof}
 With the exception of the description of the closure of the orbit, 
 we  already  proved   the theorem by proving lemma \ref{reduc lemma}. 
 See also   Theorem 4.4 of the earlier  \cite{LMS93} which is a very similar assertion.  
 (See Theorem 5.1 of \cite{LMS93} for the case of angular momentum zero.) 
 For the orbit closure business,  consider the symplectic scaling $x_j \to x_j/\eps_j, y_j \to \eps_j y_j$
 for $p < j \le p+1$ which converts $\lambda$ to 
 $\frac{1}{2}  \sum_{j=1}^p  \omega_j^2 (x_j^2 + y_j^2) + \frac{ 1 }{2} \sum_{p < j \le p+q } \eps_j ^2 y_j ^2$.
 Letting the $\eps_j \to 0$ appropriately we obtain the normal forms
 with numerical invariants $(p, q'), 0 \le q' < q$
 with the same spectral invariants. We cannot change the spectral invariants by taking closure,
 and so have accounted  for all normal forms of all orbits in the closure by varying $q'$ between $0$ and $q$.   
 \end{proof}

\subsection{The case of bodies in 3-space}

   Here we  work out and record    details 
   of   theorem \ref{symp reduction}  when $d=3$ and $L\ne 0$, that is,
   when the bodies
   move in standard three-space with non-zero angular momentum.   In this case the rank $d$
   of the motion space is given by  $d = 2p+q =3$. Since $L \ne 0$
   we must have $p> 0$ which leaves the only possibility   $p =1, q=1$.  
   The   normal form \eqref{normal form} becomes   
   \beq
   \lambda_{\omega}: = \frac{1}{2} \omega^2 (x_1 ^2 + y_1^2)  + \frac{1}{2} y_2 ^2; \qquad \omega^2 = \|L \|^2 \ne 0.
   \Leq{quad norm form} 
   The symplectic transformation $(x_2, y_2) \to (x_2/  \eps,  \eps y_2)$ while leaving 
   $(x_1, y_1)$ unchanged takes $\lambda$ 
   to $\frac{1}{2} \omega^2 (x_1 ^2 + y_1^2)  + \eps^2 \frac{1}{2} y_2 ^2$.
   Let  
   $\eps \to 0$ to establish  that
   the   closure of the orbit through $\lambda$ contains the
   orbit through 
   \beq
   \hat \lambda_{\omega}: = \frac{1}{2} \omega^2 (x_1 ^2 + y_1^2)  
   \Leq{quad planar norm form} 
   corresponding to $q =0$ and $p =1$. 
   The matrices in this later orbit all have rank $2$
   so represent planar initial conditions.  They have the  same spectral parameter $\| L \|^2$
   and hence the same angular momentum.

   \begin{theorem} 
   \label{thm: spatial reduced space} Consider the      $O(3)$ symplectic reduced space for the spatial 
   $n$-body problem at a {\bf non-zero}  angular momentum  $L= \mu \in o(3)$.  The isomorphism
   described by lemma \ref{reduc lemma}  yields a diffeomorphism between  this reduced space and  the   closure of the $Sp(2n-2)$- coadjoint orbit through   
   the normal form $\lambda_{\omega}$  of \eqref{quad norm form}  with   $\|\mu \|^2 = \omega^2$.    This  closure consists of two   orbits, the
   original orbit whose points have rank 3,  and an  added orbit labelled by the planar normal form $\hat \lambda_{\omega}$ of equation 
   \eqref{quad planar norm form}  whose points have rank 2. 
   This added orbit forms the singular locus of the   reduced space  and 
   realizes  the   $O(2)$-symplectic reduction
   of the planar $n$-body problem at this same angular momentum.   A \nbhd of such a planar  point inside the spatial reduced space  
   is    diffeomorphic to  $Cone(\R P^{\ell}) \times \R^s$ where   $\ell+1 = 2(n-2)$ is the codimension of the locus of planar points
   within $L^{-1}(\mu)$   and where  
     $s + 2(n-2) = 6n -10$ is the dimension of the spatial reduced space.  
   \end{theorem}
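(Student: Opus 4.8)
The plan is to derive the statement from Theorem~\ref{symp reduction} and Lemma~\ref{reduc lemma}, together with one local normal-form computation near a planar point. First I would specialize Theorem~\ref{symp reduction} to $d=3$: since $L=\mu\ne 0$ forces $p\ge 1$ while the rank constraint $2p+q=3$ then leaves only $(p,q)=(1,1)$, with spectral invariant $\omega^2=\|\mu\|^2$, Lemma~\ref{reduc lemma} identifies the reduced space with $G(L^{-1}(\mu))=\overline{{\mathcal O}_{\lambda_\omega}}={\mathcal O}_{\lambda_\omega}\sqcup{\mathcal O}_{\hat\lambda_\omega}$, the rank-$3$ orbit together with its rank-$2$ frontier, exactly as spelled out around \eqref{quad norm form}--\eqref{quad planar norm form}. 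To identify the frontier orbit ${\mathcal O}_{\hat\lambda_\omega}$ with the $O(2)$-reduced planar problem I would use the splitting $\R^3=\mu^\perp\oplus\R\mu$ to embed the planar phase space $P_{2,n-1}=\mu^\perp\otimes\R^{2n-2}$ in $P_{3,n-1}$: the big Gram map restricts to the planar Gram map, and for $Z'\in P_{2,n-1}$ every bivector $X_i\wedge Y_i$ lies in $\wedge^2\mu^\perp$, so the angular momentum restricts to the planar one valued in $o(2)\cong\R\hat\mu\subset o(3)$. Running Lemma~\ref{reduc lemma} and Theorem~\ref{symp reduction} again in dimension $2$ at the nonzero value $\mu$ forces $(p,q)=(1,0)$, so the planar reduced space is precisely the closed orbit ${\mathcal O}_{\hat\lambda_\omega}$, which therefore sits inside the spatial reduced space as its frontier stratum.

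Next I would establish smoothness away from the planar stratum. A configuration $Z$ of rank $3$ has column span all of $\R^3$, so $(O(3))_Z$, hence also $(G_\mu)_Z$, is trivial; thus $O(3)$ acts freely on the rank-$3$ part of $L^{-1}(\mu)$, $\mu$ is a regular value of $L$ there, and ordinary Marsden--Weinstein reduction produces a smooth symplectic manifold of the expected dimension $\dim P_{3,n-1}-\dim O(3)-\dim G_\mu=(6n-6)-3-1=6n-10$, where $G_\mu\cong SO(2)\times\Z_2$ is generated by the rotations about the $\mu$-axis (so $\dim G_\mu=1$) and the reflection $\sigma$ in the plane $\mu^\perp$. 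I would also note that any $Z\in L^{-1}(\mu)$ of rank $\le 1$ has $L(Z)=0\ne\mu$; hence every planar point of the reduced space is the image of some $Z_0$ of rank exactly $2$ whose column span is forced to be $\mu^\perp$, so the planar stratum carries no finer stratification.

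The heart of the argument is the local model near such a $Z_0$. Set $H:=(G_\mu)_{Z_0}=(O(3))_{Z_0}$; an orthogonal map fixing the columns of $Z_0$ fixes $\mu^\perp$ pointwise, so $H=\{1,\sigma\}\cong\Z_2$, and $\sigma$ fixes $\mu$, so the hypotheses of the Marle--Guillemin--Sternberg / Sjamaar--Lerman local normal form for singular symplectic reduction are met. That normal form presents a neighborhood of $[Z_0]$ in $L^{-1}(\mu)/G_\mu$, as a stratified symplectic space, as the quotient $Y/\Z_2$ of the symplectic slice $Y$ at $Z_0$ by the finite group $H$ (finite, so there is no momentum-level condition; one simply passes to the quotient). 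Splitting $Y=Y^+\oplus Y^-$ into the $(\pm 1)$-eigenspaces of $\sigma$, both of which are symplectic subspaces with $\sigma$ acting trivially on $Y^+$ and by $-1$ on $Y^-$, one obtains
\beq
Y/\Z_2\;\cong\;Y^+\times\bigl(Y^-/\{\pm 1\}\bigr)\;=\;\R^{\dim Y^{+}}\times\mathrm{Cone}\bigl(\R P^{\dim Y^{-}-1}\bigr),
\eeq
using $\R^k/\{\pm 1\}=\mathrm{Cone}(\R P^{k-1})$. To compute the eigenspace dimensions, write $\R^3=\mu^\perp\oplus\R\mu$, so $\sigma=\mathrm{diag}(I_2,-1)$ and $P_{3,n-1}=P_{2,n-1}\oplus\bigl(\R\mu\otimes\R^{2n-2}\bigr)$, with $\sigma=+1$ on the planar factor $P_{2,n-1}$ and $\sigma=-1$ on the out-of-plane factor $\R\mu\otimes\R^{2n-2}$. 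Since $(O(3))_{Z_0}$ is finite, $dL_{Z_0}$ is onto, so $\ker dL_{Z_0}=T_{Z_0}L^{-1}(\mu)$ has dimension $6n-9$; it is $\sigma$-invariant and splits along the two eigenspaces. Its part in $P_{2,n-1}$ is $\ker\bigl(dL_{Z_0}\big|_{P_{2,n-1}}\bigr)$, which surjects onto $o(2)^*\cong\R$ because the planar $SO(2)$ acts with trivial isotropy at the rank-$2$ point $Z_0$, hence has dimension $4(n-1)-1=4n-5$; moreover the one-dimensional isotropic direction $\hat\mu\,Z_0$ quotiented out in forming $Y$ lies in $P_{2,n-1}$ since $\hat\mu$ preserves $\mu^\perp$. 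Therefore $Y^-=\ker dL_{Z_0}\cap\bigl(\R\mu\otimes\R^{2n-2}\bigr)$ has dimension $(6n-9)-(4n-5)=2n-4=2(n-2)$, while $\dim Y^+=(4n-5)-1=4n-6$, and the local model becomes $\mathrm{Cone}(\R P^{2n-5})\times\R^{4n-6}$ with $\ell=2n-5$ (so $\ell+1=2(n-2)$, which is also the codimension of the planar locus inside $L^{-1}(\mu)$) and $s=4n-6$ (so $s+2(n-2)=6n-10$). Since $\R P^{2n-5}$ is not a sphere for $n\ge 4$, the cone is genuinely singular along its vertex locus, so ${\mathcal O}_{\hat\lambda_\omega}$ is exactly the singular locus; for $n=3$ one has $\R P^1\cong S^1$ and $\mathrm{Cone}(S^1)\cong\R^2$, so the spatial reduced space is in fact globally smooth there, a degeneracy worth flagging.

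I expect the main obstacle to be making the singular-reduction step fully rigorous: pinning down the symplectic slice $Y$ and the induced $\Z_2$-action precisely, keeping the $G_\mu$-versus-$O(3)$ bookkeeping straight at the nonzero value $\mu$, and checking that the two-stratum structure of the local model $Y/\Z_2$ matches the global stratification by rank, so that the reduced symplectic forms on the two strata are the expected ones and the singular locus is exactly ${\mathcal O}_{\hat\lambda_\omega}$ and no larger. Once the model $Y/\Z_2$ is in hand the remaining dimension count, as above, is routine, and the identification of the planar stratum with the $O(2)$-reduced planar $n$-body problem falls out of the dual-pair formalism already established.
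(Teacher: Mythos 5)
Your proposal is correct and follows essentially the same route as the paper: specialize Theorem~\ref{symp reduction} to $(p,q)=(1,1)$, exclude rank $\le 1$ points, show $L^{-1}(\mu)$ is smooth with isotropy trivial at rank-$3$ points and $\Z_2$ (the reflection $\sigma$) at planar points, then apply a slice theorem and split the slice into $\sigma$-eigenspaces to get $\R^{4n-6}\times Cone(\R P^{2n-5})$, with the same dimension counts. The only differences are cosmetic (you invoke the Sjamaar--Lerman normal form where the paper uses the differentiable slice theorem for the compact $G_\mu$-action, and you compute $\dim Y^-$ by subtraction where the paper computes it directly as a codimension-$2$ condition), and your identification $G_\mu\cong SO(2)\times\Z_2$ is if anything more precise than the paper's.
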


\begin{figure}[h]
\scalebox{0.4}{\includegraphics{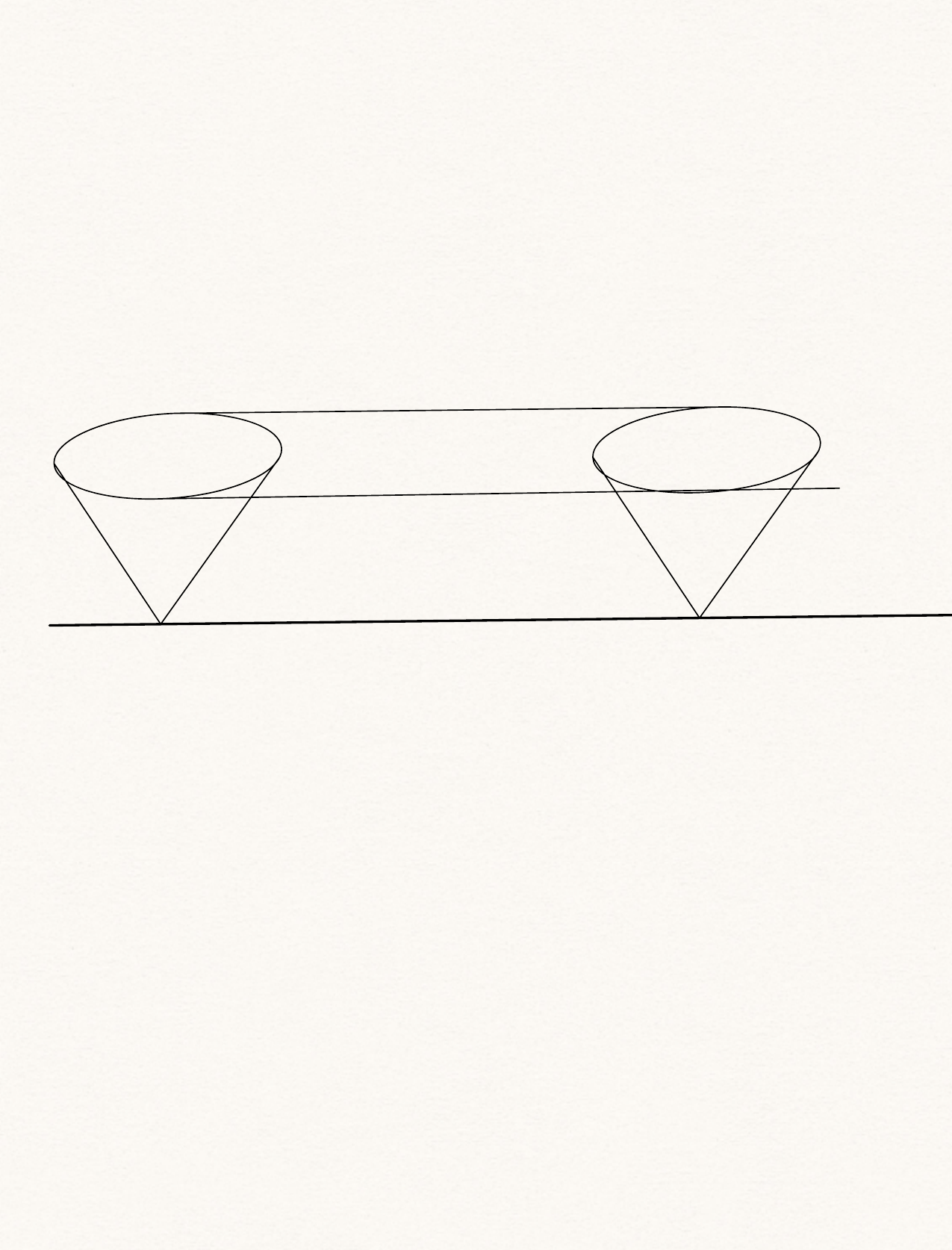}}
\caption{The planar locus is the singular locus for the spatial reduced spaces at non-zero angular momentum.
The local structure around the planar locus is that of smooth manifold times a cone over a real projective space.}  \label{fig: cone0}
\end{figure}
   
   The space $Cone(\R P^{\ell})$ in the last sentence of the  theorem  denotes  the cone  over real projective space of   dimension $\ell$.
   This cone arises as the quotient space $(\R^{\ell + 1})/ \pm 1$ where
   the `$\pm 1$' means we    identify $v$ with  $-v \in  \R^{\ell + 1}$.
   
    \begin{proof} Except for the last two sentences regarding the singularities of the reduced space,
    the theorem is an immediate corollary of the previous theorem.  For proofs of these last
    two sentences and a more physically oriented perspective on the theorem, see the proof which takes up 
    section \ref{s: spatial reduction}. 
    \end{proof} 
    
    \begin{remark} 
    That the planar and spatial strata are different coadjoint orbits has a dynamical consequence.
    No spatial (i.e. rank 3) solution can, in finite time, evolve into a planar (rank 2) phase point.
  \end{remark}

\begin{remark} The reduced space for angular momentum zero  was described   in \cite{LMS93}. That reduced space
   consists of 4 coadjoint orbits whose points have rank   0, 1, 2 or 3.   Normal forms 
 are $h =0$, $h = \frac{1}{2} y_1 ^2$, $h = \frac{1}{2} (y_1 ^2 + y_2 ^2 )$ and $h = \frac{1}{2} (y_1 ^2 + y_2 ^2 + y_3 ^2 )$.
  \end{remark}

\subsection{Resolving the cone singularity: $SO(3)$ reduction}  
We can resolve the singularity along the planar locus  just described in theorem \ref{thm: spatial reduced space}
if we  reduce by  the 
proper orthogonal group $SO(3) \subset O(3)$ instead of the full orthogonal group $O(3)$.
\begin{theorem} 
\label{so3 reduced space}Let $\mu \ne 0$ be a non-zero value of angular momentum.
The  symplectic reduced space for the $SO(3)$ action on   $P_{3, n-1}$,
reduced at $\mu$,  is smooth and  resolves the singularity of the $O(3)$ reduced
space, reduced at this same $\mu$, as   described in theorem \ref{thm: spatial reduced space}.
The fact that $SO(3) \subset O(3)$ and $O(3)/SO(3) = O(2)/SO(2) =  \Z_2$
induces a  canonical Poisson map to
the singular  symplectic $O(3)$ reduced space
which is a   2:1 branched cover branched over 
the singular locus of the $O(3)$ reduced space.  
\end{theorem}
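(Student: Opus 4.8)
The plan is to run the same dual-pair machinery used for the $O(3)$ reduced space, but now with $SO(3)$ in place of $O(3)$, and to track precisely where the two constructions diverge. First I would replace the Howe dual pair $(O(3), Sp(2n-2))$ by the pair $(SO(3), Sp(2n-2))$ acting on $P_{3,n-1}$. The $SO(3)$-invariants are no longer generated by the Gram entries $G_{ij}=Z_i\cdot Z_j$ alone: the first fundamental theorem for $SO(d)$ adds the $d\times d$ minors of $Z$ (here the $3\times 3$ determinants built from triples of columns of $Z$). On the rank-$3$ locus of $P_{3,n-1}$ this extra determinant data is exactly the choice of orientation of the $3$-plane spanned by the columns, so the $SO(3)$-quotient of the rank-$3$ locus is a connected double cover of the $O(3)$-quotient of that locus; on the rank-$\le 2$ locus the determinants all vanish and $SO(3)$ and $O(3)$ have the same orbits (a planar configuration is fixed by a reflection through its plane, which lies in $O(3)\setminus SO(3)$), so the quotients coincide there. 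I would phrase this as: the natural map from the $SO(3)$ reduced space to the $O(3)$ reduced space at $\mu$ is $2{:}1$ over the rank-$3$ stratum and $1{:}1$ over the planar (rank-$2$) stratum, i.e.\ a branched double cover branched over the singular locus identified in Theorem \ref{thm: spatial reduced space}. The Poisson structure is inherited from $G$ together with the oriented-volume functions, and the map to the $O(3)$ reduced space is Poisson because it is induced by the inclusion $SO(3)\hookrightarrow O(3)$ on momentum level sets and intertwines the respective $G$-projections.

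Next I would prove smoothness of the $SO(3)$ reduced space at a planar point, which is the whole content of ``resolving the singularity.'' Recall from the proof of Theorem \ref{thm: spatial reduced space} that a neighborhood of a planar point in the $O(3)$ reduced space is $\mathrm{Cone}(\mathbb{R}P^{\ell})\times\R^s$ with $\ell+1=2(n-2)$, and that this cone arises as $\R^{\ell+1}/\{\pm 1\}$. The cone singularity is produced precisely by the residual $\Z_2 = O(2)/SO(2)$ acting on the normal slice to the planar locus inside $L^{-1}(\mu)$: the isotropy group $G_\mu\subset O(3)$ of a non-zero $\mu$ is $O(2)$ (rotations about the angular-momentum axis together with the reflection fixing that axis), while $SO(3)_\mu = SO(2)$. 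Passing from $O(2)$ to $SO(2)$ removes exactly that $\Z_2$, so the local model becomes $(\R^{\ell+1}\times\R^s)$ with no quotient — a smooth manifold — and the covering map $\R^{\ell+1}\to\R^{\ell+1}/\{\pm1\}=\mathrm{Cone}(\mathbb{R}P^\ell)$ is the branched double cover, branched over the apex $\times\,\R^s$ = the planar locus. I would also check that away from the planar locus both $O(2)$ and $SO(2)$ act freely on $L^{-1}(\mu)$ (the stabilizer of a genuinely spatial configuration in $G_\mu$ is trivial), so over the open spatial stratum the map is an honest unbranched double cover of smooth symplectic manifolds and is a local symplectomorphism.

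The main obstacle I expect is the smoothness statement at the planar locus, i.e.\ identifying the $SO(2)$-action on the normal slice and verifying it is free so that the quotient is a manifold — this requires a local normal-form (Marle--Guillemin--Sternberg-type slice) computation around a planar orbit in $L^{-1}(\mu)\subset P_{3,n-1}$, carefully matching coordinates with the cone description already established in Section \ref{s: spatial reduction}. The orientation bookkeeping — confirming that the extra $SO(3)$-invariant minors distinguish the two sheets globally and glue correctly across the branch locus, rather than just locally — is the second delicate point; I would handle it by noting that the rank-$3$ locus of $P_{3,n-1}$ is connected and $O(3)$ acts on it with the orientation-reversal element swapping the two components of the fibers of $SO(3)\backslash(\text{rank-3 locus})\to O(3)\backslash(\text{rank-3 locus})$, so the double cover is the one associated to the non-trivial $\Z_2$-torsor, and its closure picks up the branching over the planar stratum automatically. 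The remaining assertions (Poisson-ness, the $2{:}1$ branched structure) are then formal consequences of Lemma \ref{reduc lemma} applied with $SO(3)$ in place of $O(3)$.
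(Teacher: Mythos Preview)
Your proposal is correct but takes a considerably more elaborate route than the paper does. The paper's argument is almost a one-liner: the momentum map $L$ is the same for $SO(3)$ as for $O(3)$, so $L^{-1}(\mu)$ is already known to be smooth from the proof of Theorem~\ref{thm: spatial reduced space}. The isotropy subgroups of $\mu$ are $O(3)_\mu = O(2)$ and $SO(3)_\mu = SO(2)$. In the $O(3)$ analysis the only nontrivial isotropy on $L^{-1}(\mu)$ was the $\Z_2$ generated by the reflection $(x,y,z)\mapsto(x,y,-z)$ at planar points; since this reflection does not lie in $SO(2)$, the $SO(2)$-action on $L^{-1}(\mu)$ is free everywhere. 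Smoothness of $L^{-1}(\mu)/SO(2)$ is then immediate from the standard theorem that a free proper action of a compact Lie group has a smooth manifold as quotient --- no slice computation or Marle--Guillemin--Sternberg normal form is required. The branched double cover is then just the residual quotient by $\Z_2 = O(2)/SO(2)$.

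Your detour through the $(SO(3),Sp(2n-2))$ dual pair, the $3\times 3$ minors as extra invariants, and the global orientation bookkeeping is valid and would be the right way to \emph{realize} the $SO(3)$ reduced space concretely as a Poisson subvariety of some enlarged target (beyond $sp(2n-2)^*$). But the theorem as stated does not ask for such a realization: both reduced spaces are quotients of the \emph{same} smooth level set $L^{-1}(\mu)$, by $SO(2)\subset O(2)$ respectively, so the comparison map is simply the further quotient by $O(2)/SO(2)$ and its branched-cover structure is automatic. In particular, what you flag as the ``main obstacle'' --- verifying freeness of $SO(2)$ on the normal slice --- dissolves once you note that the offending isotropy element is a reflection and hence absent from $SO(2)$.
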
 

{\sc sketch of proof.} 

We may take  $\mu$   aligned with the $z$-axis.  Then  
$O(3)_{\mu} = O(2)$ while $SO(3)_{\mu} = SO(2)$, where the `2' refers to transformations which map the  $xy$-plane
to itself and the subscripts $\mu$ indicate  isotropy subgroups  (relative to the coadjoint action) of $\mu$.  
It follows that  the  $SO(3)$ reduced space is $L^{-1}(\mu)/SO(2)$
while the $O(3)$ reduced space is  $L^{-1}(\mu)/O(2)$.  
The key facts are that $\mu$ is a regular value of $L$, 
that $SO(2)$ acts freely on $L^{-1} (\mu)$, 
and that  $O(2)$ does not.   This
last fact arises  from the presence of the reflection   $(x, y,z) \mapsto (x, y, -z)$ which is in $O(2)$ but not in $SO(2)$.
The points of  $L^{-1}(\mu)$ with extra symmetry
are the planar points,  which are those whose 3rd row is all zero.  The planar points    comprise
$L^{-1}(\mu) \cap P_{2, n-1}$.  Their   isotropy group is the 
 $\Z_2$ generated by the reflection. 
Non-planar points of $L^{-1}(\mu)$ have trivial isotropy relative to the
$O(2)$ action on centered phase space.   We have added two remarks  in 
 section \ref{s: spatial reduction}, remark  \ref{1 on so3} and remark \ref{2 for so3},  which detail
the other small changes needed to go from the proof of theorem \ref{thm: spatial reduced space} 
to   the proof of theorem \ref{so3 reduced space}.

\subsection{Extraneous dimensions and dimension counts.} 
 Most of the  evolution described by the  Lax pair representions (\ref{Lax}) is irrelevant
  when  our primary   interest is the $n$-body problem in 3-space.
  The Lax pair evolution space,  $sp(2n-2)$
  has  dimension $2n^2 - 3n +1$  which grows    quadratically with  $n$.  The  dimension of the  phase space for  n bodies
 moving in 3-space is $6n$.  After reduction by the Galilean group its image in $sp(2n-2)$,
 namely the space of   positive semi-definite matrices in $sp(2n-2)$ whose rank is $3$
 or less  has dimension $6n-9$.  The rank 3 matrices are foliated by symplectic leaves,
 these being labelled by the length $\|L \|$ of the angular momentum.
 Each leaf is a symplectic manifold of dimension  $6n-10$ which
 we can identify with the rank 3 subset of the spatial $n$-body problem's symplectic reduced space 
 for any value of $L$ with this fixed   value of $\|L\|$, provided   $\|L \| \ne 0$.   
 (When $L = 0$ the dimension of the symplectic reduced space drops further
 to $6n-12$.)  For $n$ large only the thinnest shell within the matrix space is relevant for the 
 spatial $n$-body problem!

\section{Deriving the  Quotient Map and Reduced Equations}
\label{s: derivation}
\subsection{ Newton's equations}
\label{ss: Newton's equations}
Start with Newton's equations 
$$m_a \ddot q_a = \nabla_a U (q_1, \ldots, q_n), \quad a = 1, \ldots, n .$$
for  $n$ bodies moving in  $d$-space under the influence of a potential.  
The   indices  $a= 1, \ldots, n$ label the $n$   bodies. The $q_a \in \R^d$
are the instantaneous positions of these bodies and the   $m_a > 0$   are their  masses.
 $U = U(q_1, \ldots, q_n)$ is the negative of the potential defining the dynamics
 and depends on   positions  only through   mutual
 distances 
 $$r_{ab}  = |q_a -q_b|.$$
    The symbol ``$\nabla_a$'' denotes the gradient with respect 
    to $q_a$,  in other words  the gradient we get by   allowing $q_a$ to vary while all the other
 $q_b$, $b \ne a$, are fixed.  
 
Rewrite Newton's equations in   matrix form
 \beq
 \ddot q M =  q A, \quad \text{ with }   A = A(r_{ab}) .
 \Leq{N2}
The evolving configuration $q$ is a $d \times n$ real matrix whose columns are   the  position vectors $q_a$ listed in order.   $M$ and $A$ are $n \times n$ matrices with
 $$ M = diag(m_1, \ldots, m_n).$$ 
constant and $A = A(r_{ab})$  a   symmetric
  matrix-valued function of  the mutual distances   defined by  
  \beq
\nabla U (q)  =  - q A.
\Leq{WC1}
Formula \eqref{WC1}   holds by virtue of the chain rule and the fact that $U$ depends only on the mutual distances.
  $\nabla U$ is the $d \times n$ matrix with columns the $\nabla_a U$. 
To derive \eqref{WC1} write
$U(q) = W(r_{ab}^2)$ where $W$ is a function of the symmetric matrix of squared distances $r_{ab}^2 = |q_a - q_b|^2$.
Use $\nabla_ a (r_{ab}^2) = 2 (q_a -q_b)$ and the chain rule $\nabla_a U = \sum_b \dd{W}{r_{ab}^2} \nabla_a (r_{ab}^2)$
and regroup to get $A$. Thus $A_{ab} = -2 \partial W/\partial r_{ab}^2$ for $a \not = 0$ and the diagonal elements 
of $A$ can then be obtained from the (translation symmetry requirement!) that the
 row and column  sums of $A$ must vanish.  For $k, \ell$ positive integers we will  identify $\R^k \otimes \R^{\ell}$ with
$k \times \ell$ real matrices so that our evolution occurs in 
  $$q = (q_1, \ldots, q_n) \in \R^d \otimes \R^n$$

\begin{remark}
 Equation \eqref{WC1} can be rewritten $\ddot q = -q A M^{-1}$ which
 is the form of Newton's equation used in   \cite{AlbouyChenciner98} where   $AM^{-1}$ is called  the ``Wintner-Conley matrix'' (up to a factor $-2$). 
  Why ``Wintner-Conley''?  $AM^{-1}$  was originally used in applications regarding  central configurations.
 Pacella  attributed   it   to Conley in an  article  \cite{Pacella}  which was written soon after Conley's death.  
 See also Albouy \cite{Albouy}, p. 477.   
 Wintner wrote down this same  matrix in \cite[{\S}356]{Wintner41}.
\end{remark}

\subsection{Centering phase space}
\label{ss: translation reduction}  

We use the introductory physics argument to  form 
  the quotient space  of our n-body phase space $P_{d,n}$  
by boosts and translations.     Using a  boost followed by  a translation
we may  transform any given   solution
to the $n$-body problem to a new solution  whose linear momentum 
and  center of mass are both zero, thus effectively reducing the dimension
of phase space by $2d$ dimensions.  Our  new solution then satisfies the equations
$\sum m_a \dot q_a = 0$ and $ \sum m_a q_a = 0$   throughout its evolution. These   two vector equations  define
a  linear subspace of
$P_{d,n}$  which we will call
{\it centered phase space} and which is the claimed  quotient.

 Jacobi vectors are a list of $n-1$  vectors in $\R^d$ which provide us with a   linear isomorphism
between   centered phase space and $P_{d,n-1}$. 
To explain the process,  use velocities
instead of  momenta to describe points of phase space, so 
that centered phase space  equals  $C \oplus C$
where 
$$C =  \{q: \R^d \otimes \R^n:  \sum m_a q_a = 0 \} = \{v \in \R^d \otimes \R^n:  \sum m_a v_a = 0 \}$$
 is {\it centered  configuration space}. Jacobi vectors provide a linear isomorphism
between each copy of  $C$ and $\R^d \otimes \R^{n-1}$.

To form the Jacobi vectors,    observe that the   column vectors $q_a \in \R^d$ 
of $q$ are
determined by   the matrix identity  
$$q = \sum q_a e_a$$ where $e_1, e_2, \ldots , e_n$ is the
standard basis of our label space $\R^n$, written out as   row vectors.
Thus  $e_1 = (1, 0, \ldots, 0), e_2 =  (0, 1, 0, \ldots ,0), \ldots , e_n = (0, 0, \ldots, 1)$
and each $q_a e_a$ is an  $n \times d$ matrix whose rank is at most one.  
Select a different basis  $E_1, \ldots , E_n$ for label space
and expand $q$ out in this new basis: 
\beq
q = \sum_{a =1} ^n Q_a E_a,  Q_a \in \R^d.
\Leq{Jac expansion}
Suppose that the new basis is {\it mass metric-orthonormal} and that its  final vector takes  the form  
$$E_n =  \lambda (1, \ldots, 1).$$ Then the  first $n-1$ vectors $Q_1, \ldots , Q_{n-1}$ are  Jacobi vectors.
(The mass- metric  on label space is the inner product defined by $\langle e_a , e_b \rangle = m_a \delta_{a b}$.)
 The significance  of this   final basis vector $E_n$  is that its direction
 $\hat E_n =   (1, 1 , \ldots, 1)$ generates the action of the translation group of 
$\R^d$  on configuration space  in the sense that
to    translate a configuration $q$ by 
$\tau \in \R^d$ we perform the matrix addition $q \mapsto q + \tau \hat E_n$. 
 \footnote{  It is often conceptually helpful to form the translation space 
$$\mathcal T : =\{ v E_n:    v \in \R^d \} \subset \R^d \otimes \R^n $$
and to realize  that $$C = {\mathcal T}^{\perp}$$
where the perpindicular is with respect to the mass inner product on $\R^d \otimes \R^n$. }

A computation shows that with the  $E_a$ as above the last   vector   
$$Q_n = k \sum m_a q_a, k \ne 0$$
of expansion  \eqref{Jac expansion} is  a multiple (namely $k \sum m_a$) of the center of mass vector.
Also, since $E_a \perp E_n$, $a <n$ we have that  any vector 
$c \in C$ can be uniquely expanded as  $c = \sum_{a = 1} ^{n-1} Q_a E_a$.
The $(Q_1, \ldots , Q_{n-1})$ are the  Jacobi vectors. Put  them together
as column vectors of a matrix we get the promised  $d \times n-1$ matrix $[Q_1, \ldots, Q_{n-1}] \in \R^d \otimes \R^{n-1}$ and consequent linear  isomorphism $C \cong \R^d \otimes \R^{n-1}$. 
A fundamental property of the Jacobi vectors is that they continue to diagonalize the kinetic energy:
 $$K =  \frac{1}{2} \sum_{a < n}  \mu_a |\dot Q_a |^2 + \mu_n |\dot Q_n |^2, \mu_i > 0$$
where, recall, that  $\dot Q_n = 0$ if our curve $q(t)$ lies   in $C$. 

Arrange  the new basis vectors $E_a$  in order to form   the rows of the $n \times n$ matrix $T$ whose   last row is $E_n$.  Then the  previous representation, \eqref{Jac expansion},  of $q$ can
 be re-expressed as 
  $$q = Q T$$ from which it follows that  
Newton's equations in the new linear coordinates $Q \in \R^d \otimes \R^n$ become, after right multiplication by $T^t$ :
\[
     \ddot Q T M T^t = Q T A T^t \,.
\]
The mass orthogonality of the rows $E_a$ of $T$  means that $T M T^t $ is diagonal. 
The diagonal entries of the diagonal matrix $T M T^t$ are the ``reduced masses'' $\mu_i$ in the above expansion of $K$.
Omitting the   last row and column  of $T M T^t$ gives a matrix $\tilde M$.  Omitting the last row and column of $TAT^t$ gives a matrix $\tilde A$, which is a translation reduced Wintner-Conley matrix which is again symmetric. This is the matrix $\tilde A$ that appears in the Lax equations. Renaming  the first $n-1$ columns of $Q$ by $X = (X_1, \ldots, X_{n-1})$, the translation reduced Newton's equations are
\[
    \ddot X \tilde M = X \tilde A \,.
\]
The corresponding translation reduced Hamiltonian equations are
\beq
 \dot X = Y \tilde M^{-1},  \qquad \dot Y = - X \tilde A,
 \Leq{centered Newtons}
 where $\tilde A$ is an $(n-1) \times (n-1)$ symmetric matrix function
 of the dot products $b_{ij} = X_i \cdot X_j$. 
 
 Put  $X$ and $Y$  together into the  single $d \times 2n-2$ matrix as per \eqref{Z}: 
   $ Z = (X, Y) =  (X_1, \dots, X_{n-1}, Y_1, \dots, Y_{n-1}) \in P_{d, n-1}$.
  Written in terms of   $Z$,  Newton's equations now  take  the 
claimed form  of equation \eqref{centered eqns of motion}:  
$
\dot Z = - Z P
$
where
\beq
P =\begin{pmatrix} 0 & - \tilde A (b) \\  \tilde M^{-1}  & 0 \end{pmatrix} , \quad   b =  X^t X \,,
\Leq{Ham2}
with $P$ as defined in equation \eqref{defining P}. 


  \begin{remark} Words are in order regarding the relation 
  between $A$ and $\tilde A$.  We had
  obtained  $A$ by writing the   potential 
 $U$ as a  function  of the   squared mutual distances  and then applying the chain rule to get $\nabla U (q) = -q A$ with $A$ a function of the  squared mutual distances  $r_{ab}^2$. 
Both sets of   functions   $r_{ab} ^2$ and $X_i \cdot X_j$
  are quadratic  functions on configuration space  $E_{d,n}$  which   are invariant under the action of the   isometry group   of $\R^d$.  
  The space of all real-valued quadratic polynomials on 
  $\R^d \otimes \R^n$ which are   invariant under the action of the isometry group of $\R^d$ 
  forms a real vector space of   dimension $n \choose 2$. The squared distances $r_{ab}^2$ form one basis,  and
the entries  $X_i \cdot X_j$ of the small Gram matrix $b= X^t X$ form another.      
In particular each $r_{ab}^2$ can be written as a linear combination
of the $X_i \cdot X_j$ so that we can write the negative of the potential $U$ as
a function of $b$:   $ U(X) = \tilde W (b)$
for some function $\tilde W$ of $b$. 
An application of the  chain rule   yields  
\beq
\nabla U (X) = -X \tilde A (b),  \quad b = X^t X
\Leq{Ham1} 
where the $(n-1) \times (n-1)$ symmetric-matrix valued function  $\tilde A = \tilde A (b)$ is linear in $d \tilde W$.
Explicitly  $\tilde A_{ij} = - \dd{\tilde W}{b_{ij}}$  for $i \ne j$ while $\tilde A_{ii} = - 2\dd{\tilde W}{b_{ii}}$. 
{\it By slight abuse of notation we will also refer to $\tilde A$ as the Wintner-Conley matrix.}
The translation-invariance of $U$ is what implies that $\hat E_n A = 0$
and allows for us to cut $A$ down to the matrix $\tilde A$ with one less row and column than $A$.

Regarding the equivalence between the forms $\tilde A$ and $A$ see also section 4
of \cite{AlbouyChenciner98} where  they look at what it means for  two matrices
to represent the same bilinear form on the dual space ${\mathcal D}^*$  to their space of dispositions.
See also Moeckel \cite{Moeckel}. 

\end{remark}

\begin{remark}
When we choose the   basis $E_a$ for label space to be not just  orthogonal but  orthonormal then we
 the resulting Jacobi vectors ``normalized Jacobi vectors''.
  For such a choice we find that  $\tilde M = Id$ and that  the    kinetic energy is $\tfrac12 \sum |\dot Q_a|^2$.
  \end{remark}

\subsection{Reducing by Rotations.} 
\label{ss: rotation reduction}   
In the last section we arrived at   $P_{d,n-1}$  as a realization
of the quotient of   phase space by  the action of translations and boosts.  
Rotations and reflections of $d$-space remain and act on   $P_{d,n-1}$ by
$Z \mapsto g Z$ where $g \in O(d)$ represents a rotation or reflection
about the center of mass.   

The  Gram matrix
$ G = Z^t Z$ 
is invariant under this $O(d)$ action and is  a
``complete invariant'' according to the first main theorem of invariant theory.  See the book by Weyl, \cite{Weyl},
Theorem 2.11.A, p. 64 for the statement of this theorem in our case.
Since the quotient map $G$ is also the momentum map for the
$Sp(2n-2)$ action, it follows that   the quotient space $P_{d, n-1} /O(d)$ is isomorphic,
as a Poisson manifold,  to the image of $G$ within $sp(2n-2)^*$.

\subsection{Comparisons with Albouy-Chenciner}
\label{s: AC = us}

In their  influential paper \cite{AlbouyChenciner98}
Albouy and Chenciner derived reduced equations for the $n$-body problem using
 a method   similar to the method we   just used. 
See equation $(N Rel)$ in \cite{AlbouyChenciner98}.
  We now describe the equivalence between their reduced  equations and  our   reduced equations. 
  
Call  our Wintner-Conley matrix $A$.  Their  Wintner-Conley matrix is our $AM^{-1}$ which need not be symmetric.
Our formulation,   based as it is  on $sp(2n-2)$,  which is a  space
of symmetric matrices, requires  the  various matrices arising in the equations to be  symmetric.    
  Write our Gram matrix in terms of 4 symmetric blocks $b,c,d,r$ as
\[
 G =   \begin{pmatrix} b & c + r\\
 c - r & d \end{pmatrix} \,.
\]
Rewriting the Lax equation  \eqref{Lax} in this notation 
gives
\begin{equation}
\dot G =  \begin{pmatrix} 
  (c+r)\tilde M^{-1} + \tilde M^{-1} ( c - r)                   &  \tilde M^{-1} d -  b \tilde A  \\ 
  d \tilde M^{-1}  -  \tilde A  b  & -\tilde A  (c-r)  - (c+r) \tilde A 
\end{pmatrix} \,,
\label{eq:NRel}
\end{equation}
Choose  normalized Jacobi vectors so that $\tilde M = I$.  Then  the top left block reduces to $2c$,
and the entire equation becomes   their equation $(N Rel)$ up to the factor $-2$ in $A$.  See \cite{Chenciner12} and \cite{Moeckel} 
for more details.

Here is an alternative derivation of the equivalence of the two reduced  equations.   
Define $\hat q = q M^{1/2}$,  $\hat p =  p M^{-1/2} = \dot q M^{1/2} $ and $\hat Z = ( \hat q, \hat p)$ and hence $\hat G = \hat Z^t \hat Z$ and 
$\hat K = J \hat G$.  
In the new symplectic variables Newton's equations written as a first order system are $\dot {\hat q} = \hat p$, $\dot{\hat p} = \hat q \hat A$ 
where $\hat A = M^{-1/2} A M^{-1/2}$ which is symmetric.   Impose
the constraints of zero center of mass and
zero linear momentum: $\sum m_a q_a = \sum \sqrt{m_a} \hat q_a = \hat q w = 0$
where $w = ( \sqrt{m_1}, \dots, \sqrt{m_n} )^t$, and similarly $\sum p_a = \sum m_a v_a = \hat p w = 0$
so that both $\hat Z \in P_{d, n}$ and $\hat G$ have the kernel $(w^t,w^t)^t$. 
Compute 
\[
    \dot{ \hat K} = [ \hat P, \hat K], \quad \hat P = J \hat S, \quad \hat S = \begin{pmatrix} I & 0 \\ 0 & \hat A \end{pmatrix}\,. 
\]
Splitting $\hat G$ into symmetric blocks $B, C, D,  R$ now literally gives the equations of Albouy and Chenciner \cite{AlbouyChenciner98},
equations \eqref{eq:NRel} with $\tilde M = I$ and  lower case $b,c,d,r$ replaced by upper case $B,C,D,R$, 
in particular also $\dot R = [ \hat A, B]$.
This `hat'   Lax pair equation for evolution in $sp(2n)$  leaves the matrices
with kernel  $(w^t,w^t)^t$ invariant.  This  subalgebra of $sp(2n)$
can be identified with  $sp(2n-2)$ if desired.   

 \begin{remark}The observation that the invariants can be thought of as comprising
the subalgebra of  $sp(2n)$   with a fixed kernel 
was made in \cite{Dullin13}, where in a different basis the matrix $\hat G$ was represented as symmetric $2\times 2$ block-Laplacian matrices. 
In this way, by using a slightly modified basis of invariants $\hat G$, we can retain the simplicity of the Albouy-Chenciner equations without extra factors involving the mass matrix, but also retain the symplectic algebra in its standard basis.
A similar ``hat-construction'' can be done   by defining $\hat X = X \tilde M^{1/2}$,
$\hat Y = Y \tilde M^{-1/2}$ etc., where $X, Y$ are our Jacobi-vector based $d \times (n-1)$ matrices,
using the  translation reduced symmetric Wintner-Conley matrix $\hat A = \tilde M^{-1/2} \tilde A \tilde M^{-1/2}$.
   \end{remark}

    \begin{remark}
        One  difference in our  two approaches to reduction 
 is that Albouy-Chenciner  use  the language of dispositions  while we do not.
 Dispositions allowed them  to  construct 
 the quotient space by translations and boosts without choosing a basis for the label space $\R^n$
 and thus allow them to avoid   Jacobi vectors.  
 Another difference  is that their matrix $Z$ contains velocities  while ours is based on
  momenta, and hence
the basis of quadratic invariants looks  different  in our two representations. 
\end{remark} 

 \section{Proving the spatial symplectic reduction theorem}
 \label{s: spatial reduction}
 
 Here we  prove  theorem  \ref{thm: spatial reduced space}.
What remains to prove is the structure of the singularities  described in the last two sentences
of the theorem. 
Recall what we already know.  Our initial coadjoint orbit
 consists of rank 3 matrices  having the single spectral  invariant $\omega^2 = \|L \|^2$.
 These matrices  can all be written as   $G(Z)$ where $Z \in P_{3, n-1}$ has rank 3 and $\|L(Z) \|^2 = \omega^2$.
 The closure of this orbit contains exactly one new orbit, which consists of rank  2 matrices having  the same spectral invariant. 
  
  We may assume, by $O(3)$ equivariance of the momentum map,  that the angular momentum $\mu = L(Z)$ at which we are 
  performing reduction  points along the positive
  $z$-axis.     Identify  {\it planar}  centered phase space $P_{2, n-1}$
   with the linear   subspace  of $P_{3, n-1}$ consisting of matrices whose column vectors
   are perpindicular to $\mu$.  In other words, $P_{2, n-1}$ consists of  the $3 \times 2n-2$ matrices
   whose third row is identically zero.  
   Clearly we can find planar phase points $Z' \in P_{2, n-1}$ having  angular  momentum $\mu$.
        Since we can
   achieve any planar  phase point in $P_{3, n-1}$ as a limit of spatial ones, the image
   $G' = Z ^{' t} Z'$ of such a planar point  
   lies in the closure of our rank 3 coadjoint orbit.  It follows that the points added
   by taking closure of the rank 3 orbit  consists of rank 2 matrices
   sharing the same   spectral invariant.

{\it Rank one points do not lie on our  level set.}  We have assumed $\mu \ne 0$.
If  $L(Z) \ne 0$ then   $Z$ cannot have rank $1$.  For  if $Z$ had rank $1$ 
we would have all its columns  
$X_i$ and $Y_i$   proportional to the same unit vector
which would imply that $X_i \wedge Y_i = 0$ and thus   $L(Z) = \sum_i X_i \wedge Y_i  = 0$.   In other words, collinear configurations have angular momentum zero.

We now turn to singularities of our reduced space.
These singularities can arise  in two ways: as singular points of the level set $L^{-1}(\mu)$,
or as singularities of the quotienting operation by $G_{\mu}$.  We will
show that  our singularities all arise in the second way.  In the process we will show
that the singular points correspond to the rank 2 points.

 \begin{lemma}
 (Particular to  $d =3$ and $d=2$.) The  level set $L^{-1} (\mu)$ is smooth whenever $\mu \ne 0$.
 \end{lemma}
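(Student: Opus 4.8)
The plan is to show that $\mu\neq 0$ forces the differential $dL_Z$ to be surjective at every $Z\in L^{-1}(\mu)$, so that $\mu$ is a regular value and $L^{-1}(\mu)$ is a smooth submanifold by the regular value theorem. Recall $L(Z)=ZJZ^t$, a map from $P_{d,n-1}=\R^d\otimes\R^{2n-2}$ to $o(d)$. Its derivative at $Z$ in the direction $\dot Z$ is $dL_Z(\dot Z)=\dot Z J Z^t + Z J \dot Z^t$, which is the antisymmetrization $\dot Z J Z^t - (\dot Z J Z^t)^t$. So surjectivity onto $o(d)$ amounts to: every antisymmetric $d\times d$ matrix is of the form $\dot Z J Z^t - (Z J \dot Z^t)$ for some $\dot Z$. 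Equivalently, writing $W=ZJ$ (a $d\times(2n-2)$ matrix, whose row span equals the row span of $Z$ since $J$ is invertible), we need the map $\dot Z\mapsto \dot Z W^t - W\dot Z^t$ to hit all of $o(d)$.

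The key observation is a rank argument special to $d=3$ (and trivially $d=2$): we have already shown in the excerpt that $L(Z)\neq 0$ forces $\operatorname{rank}(Z)\geq 2$, i.e. $\operatorname{rank}(Z)\in\{2,3\}$. I would split into these two cases. If $\operatorname{rank}(Z)=3$, then $Z$ (hence $W=ZJ$) has full row rank $3$, so the rows of $W$ span all of $\R^{2n-2}$... more precisely $W$ has a right inverse; choosing $\dot Z$ of the form $\dot Z = S (W^t)^{+}$ ... let me instead argue directly: pick any $\xi\in o(3)$; I want $\dot Z$ with $\dot Z W^t - W \dot Z^t = \xi$. Take $\dot Z = \tfrac12 \xi W (W^tW)^{-1}\cdot(\text{something})$ — cleaner: since $W$ has full row rank, $WW^t$ is invertible ($3\times 3$), and setting $\dot Z = \tfrac12\,\xi\,(WW^t)^{-1}W$ gives $\dot Z W^t = \tfrac12\xi$ and $W\dot Z^t = \tfrac12 W W^t(WW^t)^{-1}\xi^t = \tfrac12\xi^t = -\tfrac12\xi$, so the difference is $\xi$. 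Hence $dL_Z$ is onto whenever $\operatorname{rank}(Z)=3$.

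The genuinely delicate case is $\operatorname{rank}(Z)=2$: now $W=ZJ$ has rank $2$, its rows span a $2$-plane $V\subset\R^{2n-2}$, and the construction above breaks. Here I would use that $L(Z)\neq 0$ pins down the geometry: after an $O(3)$ rotation (using equivariance of $L$) assume $Z$ lies in $P_{2,n-1}$, i.e. its third row vanishes, and $\mu=L(Z)$ is the generator of rotations in the $xy$-plane, nonzero. Decompose $\dot Z$ into its first two rows $\dot Z_{\mathrm{pl}}$ and its third row $\zeta\in\R^{2n-2}$. The block of $dL_Z(\dot Z)$ coupling the third coordinate to the first two is governed by $\zeta$ paired against the rows of $W$ (equivalently the rows of the planar $Z$); since those two rows are linearly independent (rank $2$) and $\mu\neq 0$ guarantees they are not merely proportional, varying $\zeta$ produces both of the "mixed" components of $o(3)$, while varying $\dot Z_{\mathrm{pl}}$ recovers the remaining $xy$-component exactly as in the planar ($d=2$) regular-value computation — and the $d=2$ case is itself handled by the full-row-rank argument of the previous paragraph since a rank-$2$ planar $Z$ has full row rank as a $2\times(2n-2)$ matrix. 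So $dL_Z$ is onto in this case too. The main obstacle I anticipate is precisely bookkeeping this $2$-vs-$3$ row decomposition cleanly and confirming that the nonvanishing of $\mu$ (not just of $\operatorname{rank} Z$) is what rules out the degenerate configuration where the two spanning rows align and a direction of $o(3)$ is missed; once that is nailed down, smoothness of $L^{-1}(\mu)$ follows immediately from the submersion theorem.
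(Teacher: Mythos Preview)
Your argument is correct and complete once the rank-2 case is carried through as you sketch: with $Z$ rotated so its third row vanishes, the map $\zeta\mapsto Z_{\mathrm{pl}}J\zeta^t$ surjects onto $\R^2$ because $Z_{\mathrm{pl}}J$ has rank $2$, and the planar block is handled by your full-row-rank construction with $d=2$. One small clarification: your worry in the last paragraph is misplaced. The condition $\mu\neq 0$ is used only to guarantee $\rank(Z)\geq 2$; once that is secured, surjectivity of $dL_Z$ follows from the rank alone, and you do not need to invoke $\mu\neq 0$ again to rule out ``rows aligning.'' Rank $2$ already means the two rows of $Z_{\mathrm{pl}}$ are independent.

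The paper takes a different and more conceptual route. Rather than construct explicit preimages, it invokes the standard symplectic fact (stated here as a separate lemma) that for a Hamiltonian $\G$-action with momentum map $L$, the differential $dL(Z)$ is onto if and only if the isotropy group $\G_Z$ is finite. It then checks local freeness directly: a rank-$3$ point has trivial $O(3)$-isotropy (its columns contain a basis of $\R^3$), while a rank-$2$ point has isotropy the two-element group generated by reflection across the plane spanned by its columns. Your approach is more elementary and self-contained, avoiding the general momentum-map lemma. The paper's approach, however, buys something you will need anyway: it simultaneously identifies the isotropy groups at each stratum, and the $\Z_2$ isotropy at planar points is exactly what drives the $Cone(\R\bP^\ell)$ singularity structure in the rest of the theorem. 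So if you go your route, you will still have to compute those isotropy groups separately for the next step.
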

 This lemma follows immediately from the 
 fact, just proven, that if $\mu \ne 0$ then $L^{-1}(\mu)$ has no rank 1 or 0 points,
 and the following lemma.  
    \begin{lemma} If $Z \in P_{3,d}$ has rank 2 or 3 then
    the differential $dL(Z): P_{3,d} \to \R^3$ 
    is onto.     \label{l: rank lemma}
   \end{lemma}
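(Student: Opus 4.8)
The plan is to show surjectivity of $dL(Z)$ by exhibiting, for each of the three coordinate directions $e_1\wedge e_2, e_1\wedge e_3, e_2\wedge e_3$ of $o(3)\cong\R^3$, a tangent vector $\dot Z\in P_{3,d}$ at which $dL(Z)$ hits that direction. Recall $L(Z)=\sum_i (X_i\wedge Y_i)=\sum_i (X_iY_i^t-Y_iX_i^t)$, so differentiating along $\dot Z=(\dot X,\dot Y)$ gives $dL(Z)(\dot Z)=\sum_i(\dot X_i\wedge Y_i+X_i\wedge\dot Y_i)$. The key structural observation is that since $Z$ has rank $2$ or $3$, the column space $V=\mathrm{span}\{X_1,\dots,X_{n-1},Y_1,\dots,Y_{n-1}\}\subset\R^3$ has dimension $\ge 2$; pick indices so that two particular columns of $Z$, call them $u$ and $v$ (each being some $X_i$ or $Y_i$), are linearly independent. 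Then wedging against these two vectors already produces two independent bivectors, and a short argument upgrades this to all three.

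The key steps, in order: first, reduce to the two-body-like building block — perturbing only the single pair $(X_{i_0},Y_{i_0})$ while freezing all other columns shows that the image of $dL(Z)$ contains the image of $(\dot X,\dot Y)\mapsto \dot X\wedge Y_{i_0}+X_{i_0}\wedge\dot Y$. Second, analyze this building block: if $X_{i_0}$ and $Y_{i_0}$ are themselves linearly independent, then $\{w\wedge Y_{i_0}:w\in\R^3\}$ is already a $2$-plane in $\Lambda^2\R^3\cong\R^3$ (the orthogonal complement of $Y_{i_0}$ under the Hodge identification is $2$-dimensional), and adding $\{X_{i_0}\wedge w'\}$ — another $2$-plane, namely $Y_{i_0}^{\perp}$'s partner — spans all of $\R^3$ because $X_{i_0}^\perp\neq Y_{i_0}^\perp$ when $X_{i_0}\not\parallel Y_{i_0}$; so a single pair with independent $X_{i_0},Y_{i_0}$ already gives surjectivity. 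Third, handle the remaining case where every pair $(X_i,Y_i)$ is collinear but $Z$ still has rank $\ge 2$: then there exist indices $i\ne j$ with, say, $X_i\not\parallel X_j$ (after relabeling $X\leftrightarrow Y$ within a column as needed), and perturbing $\dot X_i$ arbitrary while keeping everything else fixed contributes $\dot X_i\wedge Y_i$, which ranges over $Y_i^\perp$ (a $2$-plane since $Y_i\parallel X_i\ne0$), while the analogous perturbation of $\dot X_j$ contributes over $Y_j^\perp$; since $X_i\not\parallel X_j$ forces $Y_i\not\parallel Y_j$, these two $2$-planes are distinct and together span $\R^3$. In either case $dL(Z)$ is onto.

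I expect the main obstacle to be bookkeeping the case distinction cleanly: one must argue that \emph{some} two columns of $Z$ are independent (immediate from $\mathrm{rank}(Z)\ge 2$) and then that, whichever columns these are and whichever "slot" ($X$ or $Y$) they occupy, one can arrange the perturbations so that each individual wedge term $w\wedge(\text{fixed nonzero vector})$ sweeps out a full $2$-plane and two such $2$-planes with non-parallel normals are chosen. The underlying linear-algebra fact doing all the work is simply that in $\R^3$, $\Lambda^2\R^3\cong\R^3$ and $\{w\wedge a:w\in\R^3\}=a^\perp$ for $a\ne0$, so two such sets sum to everything as soon as $a\not\parallel b$; the content is purely that rank $\ge 2$ guarantees we can locate such an $a$ and $b$ among the columns of $Z$ (or among $X_{i_0},Y_{i_0}$ of a single non-collinear pair). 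The rank-$1$-excluded hypothesis is exactly what is needed and has already been verified above, so no further input is required.
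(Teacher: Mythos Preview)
Your argument is correct in substance and takes a genuinely different route from the paper. The paper does not compute $dL(Z)$ directly at all; instead it invokes the general symplectic principle (recorded as Lemma~\ref{l: loc free}) that for a Hamiltonian $\G$-action the momentum map has surjective differential at $Z$ if and only if the isotropy group $\G_Z$ is finite, and then checks by hand that rank-$3$ points have trivial $O(3)$-isotropy while rank-$2$ points have isotropy the two-element reflection group $\Z_2$. Your direct wedge-product computation is more elementary and self-contained, bypassing the symplectic machinery entirely; the paper's approach, by contrast, simultaneously delivers the isotropy groups themselves, which are needed immediately afterward for the singularity analysis of the reduced space.

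One small wrinkle in your Case~2: having arranged (via relabeling) that $X_i\ne0$, you then perturb $\dot X_i$ and assert that $\dot X_i\wedge Y_i$ sweeps out a $2$-plane ``since $Y_i\parallel X_i\ne0$.'' But collinearity of the pair $(X_i,Y_i)$ permits $Y_i=0$, in which case $\dot X_i\wedge Y_i\equiv 0$ and you get nothing from that perturbation. The fix is immediate and in the spirit of what you wrote: perturb $\dot Y_i$ instead, so that the contribution $X_i\wedge\dot Y_i$ sweeps out $X_i^{\perp}$, which \emph{is} a $2$-plane since $X_i\ne0$ by construction. With that adjustment (and the analogous one for index $j$) your argument goes through cleanly.
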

   
 Lemma \ref{l: rank lemma} follows from a  general fact concerning
   Hamiltonian actions.   The    context is that
   of a compact Lie group $\G$  acting in a Hamiltonian fashion on
   a smooth symplectic manifold $P$ with momentum map $L: P \to Lie(\G)^*$.
   The $\G$-action is said to be ``locally free'' at $Z \in P$ if the isotropy
   subgroup $\G_Z$ at $Z$ is finite.  
   \begin{lemma}   $dL(Z): T_Z P \to Lie(G)^*$ is onto   if and only if 
   the $\G$   action is locally free at $Z$. 
   \label{l: loc free} 
   \end{lemma}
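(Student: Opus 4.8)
The plan is to deduce this from the standard \emph{bifurcation lemma}, which computes the rank of a momentum map at a point in terms of the stabilizer of that point. Write $\mathfrak h = Lie(\G)$, and for $\xi \in \mathfrak h$ let $\xi_P$ denote the fundamental vector field on $P$ generated by $\xi$, so that $\xi_P(Z) = \tfrac{d}{dt}\big|_{t=0}\,\exp(t\xi)\cdot Z$. As $\xi$ ranges over $\mathfrak h$ the vectors $\xi_P(Z)$ sweep out the tangent space $T_Z(\G\cdot Z)$ to the group orbit through $Z$, and $\xi_P(Z)=0$ holds precisely when $\xi$ lies in $\mathfrak h_Z := Lie(\G_Z)$, the Lie algebra of the isotropy subgroup.

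First I would use the defining property of the momentum map, $d\langle L,\xi\rangle = \iota_{\xi_P}\omega$, to identify the annihilator of the image of $dL(Z)$ inside $\mathfrak h$. For $v \in T_Z P$,
\[
\langle\, dL(Z)v,\ \xi\,\rangle \;=\; d\langle L,\xi\rangle(v) \;=\; \omega_Z\big(\xi_P(Z),\,v\big).
\]
Hence $\xi$ annihilates $\mathrm{im}\,dL(Z)$ if and only if $\omega_Z(\xi_P(Z),v)=0$ for every $v\in T_Z P$, which by nondegeneracy of $\omega_Z$ happens if and only if $\xi_P(Z)=0$, i.e. if and only if $\xi\in\mathfrak h_Z$. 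Thus $(\mathrm{im}\,dL(Z))^{\circ} = \mathfrak h_Z$ as subspaces of $\mathfrak h$.

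Now $dL(Z)$ is onto iff $\mathrm{im}\,dL(Z) = \mathfrak h^*$, which in finite dimensions is equivalent to its annihilator being $\{0\}$, i.e. to $\mathfrak h_Z = 0$. It remains to translate this statement about Lie algebras back into the group language used in the definition of ``locally free.'' Since $\G$ is compact, $\G_Z$ is a closed subgroup and hence itself a compact Lie group; and a compact Lie group is zero-dimensional if and only if it is finite. Therefore $\mathfrak h_Z = 0$ if and only if $\G_Z$ is finite, and the chain of equivalences
\[
dL(Z)\ \text{onto} \iff \mathfrak h_Z = 0 \iff \G_Z\ \text{finite} \iff \text{the }\G\text{-action is locally free at }Z
\]
gives the claim.

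There is no real obstacle in this argument; the only point requiring attention is the last equivalence, which genuinely uses compactness of $\G$ (a non-compact group can have an infinite discrete stabilizer). For the applications $\G = O(3)$ or $\G=O(2)$ is compact, so this causes no trouble. To then obtain Lemma \ref{l: rank lemma} one applies the result with $P = P_{3,d}$, $\G = O(3)$, and $L$ the angular momentum, after checking that $O(3)_Z$ is finite whenever $\rank Z \ge 2$: an element of $O(3)$ fixing every column of $Z$ must fix the subspace spanned by those columns pointwise, and the subgroup of $O(3)$ fixing a subspace of dimension $\ge 2$ pointwise is either $\{I\}$ or the $\Z_2$ generated by the reflection in that plane --- finite in either case.
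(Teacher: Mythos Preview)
Your proof is correct and follows essentially the same approach as the paper's: both use the defining relation $d\langle L,\xi\rangle = \iota_{\xi_P}\omega$ together with nondegeneracy of $\omega$ to identify $\rank(dL(Z))$ with $\dim(\G Z)$ (equivalently, $(\mathrm{im}\,dL(Z))^{\circ}=\mathfrak h_Z$), and then translate this into the ``locally free'' condition. Your annihilator formulation is the standard bifurcation-lemma phrasing, while the paper does the equivalent dimension count via the symplectic gradient map $J_Z$; the content is the same, and your explicit remark that compactness of $\G$ is what converts $\mathfrak h_Z=0$ into $\G_Z$ finite is a nice point the paper leaves implicit.
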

     This fact is well-known
   in this context and can be found in textbooks.   We recall its proof for completeness. 
 \begin{proof} (of lemma \ref{l: loc free})
  To say   that the $\G$ action is locally free at $Z \in M$
  means that the $\G$ orbit, $\G Z$, through $Z$  has dimension $\dim(\G)$.
  But the  momentum map generates the action, infinitesimally. It follows that   the
  Hamiltonian vector fields $X_{L_i}$   of the components $L_i, i =1, \ldots, dim(\G)$  of our momentum
  map span $T_Z (\G Z )$ when these fields are evaluated at $Z$.  But the  `symplectic gradient map'  $J_Z : T^* _Z P \to T_Z P$ 
  taking one-forms  like  $dL_i (Z)$ to their Hamiltonian vector fields $X_{L_i} (Z) = J_Z (dL_i (Z))$ is an invertible linear map.
  (It is the  ``inverse'' of the symplectic form).
  It follows that   the span of the 
   differentials $dL_i (Z)$ has the same dimension as that of the orbit $\G Z$.  
   Consequently, the action at $Z$  is locally free if and only if $\rank(dL(Z)) = \dim(\G)$   i.e.
   if and only if $Z$ is a regular value of $L$. 
   \end{proof} 
      
    We are now in a position to prove the rank lemma, lemma
    \ref{l: rank lemma}  as a corollary of lemma \ref{l: loc free}.
   
   {\it Spatial points are locally free.} 
   Suppose that $Z \in P_{3, n-1}$ has  rank $3$. We claim that  the isotropy of $Z$ is the identity
   group.   
  Since $Z$ has rank 3   we can choose three column vectors
   from the columns of   $Z$    which form a basis for $\R^3$.  If $g \in O(3)$
   fixes $Z$ then it  fixes this basis and hence is the identity on $\R^3$.    
   
  {\it Planar points are locally free.}  Next suppose that  $Z \in P_{3, n-1} $ has rank 2.  Then its   column vectors span a plane.
  Select two such vectors which form a basis for this plane.   
Any $g \in O(3)$ which stabilizes $Z$ must stabilize these two vectors and hence  be the identity on this plane.
The only non-trivial possibility for $g \in O(3)$ with $gZ = Z$ is either the identity or   reflection about this plane. 
The isotropy of $Z$ is this two-element reflection group $\Z_2$.    

QED

 To summarize:   $L^{-1}(\mu)$
 is smooth and consists of rank 3 and rank 2 matrices $Z$.
 The former have isotropy group the identity.
 The latter have isotropy group $\Z_2$ the two-element
 group of reflections about the plane $z = 0$.
 
 \begin{remark}
 \label{1 on so3}
 The momentum maps $L$  for the $O(3)$ action and for the $SO(3)$
 action are the same map, so that $L^{-1}(\mu)$  continues to be
 smooth for $SO(3)$.  What changes when we go to $SO(3)$
 is that now the action is free everywhere: planar points also have trivial isotropy.
 \end{remark} 
 
Continuing the reduction process, next we are to form $L^{-1}(\mu)/G_{\mu}$ where $G_{\mu} \subset O(3)$
 is the subgroup which leaves $\mu$ fixed under conjugation.
 
 {\it Observe   $G_{\mu} = O(2)$.} 
We have taken  $\mu =  (0,0, \|L\|)$  
 aligned  with the $z$-axis.
The isotropy subgroup of $\mu \in o(3)$  under the $O(3)$
action  then consists of the subgroup $O(2) \subset O(3)$
which preserves the splitting  of $\R^3$  into $\R^2 \oplus \R$.   (This isotropy fact  can be checked most
easily using the matrix realization of angular momenta.   Then the  group  $O(3)$ acts by conjugation on $\mu$'s.)

   Singularities of a quotient arise exactly
  where the isotropy type jumps. In our case jumps happen   at  planar points.
  We have proven the penultimate sentence of theorem  \ref{thm: spatial reduced space}
  
  We are now at the point of validating  the  last sentence of the theorem, the structure of the singularity.
 To achieve this we use some of  the basic theory of   smooth actions of compact Lie groups on smooth manifolds.
 For a synopsis of  this theory we recommend   the first chapter of Hsiang \cite{Hsiang}. (See in particular Theorem (I.5) on p. 11,
 the differentiable slice theorem.)
  Let $\G$, a compact Lie group, act smoothly on a connected  manifold $P$.  Suppose that
  for some points the isotropy group is trivial: $\G_Z = Id$.  Then the set of all such
  points $Z$  with $\G_Z = Id$  is open and dense and their projected image to the
  quotient space   $P/\G$ form the smooth points of this quotient.
  So suppose $Z$ is a point with $\G_Z \ne Id$.   We want
  to understand the local structure of the quotient near $\pi(Z) \in P/\G$
  where $\pi$ is the quotient map. For this it is
  enough to know  the {\it isotropy representation} at $Z$.  
  We recall how this works.  Since $\G_Z$ fixes $Z$ it acts linearly on $T_Z P$. 
  This linear representation is called {\it the isotropy representation}.    Put a $\G$-invariant
  \Ri{} metric on $P$.  Then, since $\G_Z$ maps the orbit $\G Z$ through $Z$
  to itself, the isotropy representation splits as
  $T_Z P = T_Z (\G Z) \oplus (T_Z (\G Z))^{\perp}$.
  Set $N_Z = (T_Z (\G Z))^{\perp}$, the normal space to the orbit. 
  We are   interested in the restriction of the $\G_Z$ representation
  to $N_Z$.  Since $\G_Z$ is compact (finite in our case!)
  we have that $N_Z/\G_Z$ is a nice topological space.  
  The basic theory, as summarized by this differentiable slice theorem, tells us that a \nbhd of $\pi (Z)$ in the quotient space 
  is diffeomorphic (as a singular variety) to the space  $N_Z/ \G Z$.
  
  How does the isotropy representation  play out for us?  We have $\G = O(2) \subset O(3)$,  
  $P =  L^{-1} (\mu) \subset P_{3, n-1}$ and that the only points with non-trivial
  isotropy are the planar points   $Z \in L^{-1}(\mu) \cap P_{2, n-1}$.  
  For these planar points we have that    $\G_Z = \Z_2$.  
  Any representation space $\V$ for $\Z_2$ splits into $\V = \V_+ \oplus \V_-$
  where the nontrivial element $\sigma$ of $\Z_2$ acts as the identity on $\V_+$
  and $\sigma$ acts as  minus the identity,  on $V_-$.  Then $\V/\Z_2 = \V_+ \times  (\V_- / \Z_2)$
  where $\V_- /\Z_2$ is the quotient space of $\V_-$ by the equivalence relation $v \sim -v$ for  $v \in \V_-$.
  We can understand this last quotient by restricting the quotient map  to the unit sphere in $\V_-$.
  The quotient space of the sphere by this same equivalence relation is   $\R \bP^\ell$ where $\ell = dim(\V) -1$ is the dimension of the sphere.
  Consequently $(\V_- / \Z_2 )  \cong Cone(\R \bP ^\ell)$ and $\V/\Z_2 = \R^s \times  Cone(\R \bP ^{\ell})$
  where $s = dim(\V_+),  \ell+1 = dim(\V_-)$.
  
  In our situation it remains only to find the dimensions $s, \ell$
  for the $\pm$ decomposition of the isotropy representation of $\Z_2$ 
  on $N_Z = T_Z (\G Z) ^{\perp} \cap \ker( dL(Z))$ where $L(Z) = \mu = \omega e_3$
  and $Z \in P_{2, n-1}$.  
  To this end, we first decompose all of $P_{3, n-1}$ and then
  restrict the decomposition to  $N_Z \subset  \ker ( dL(Z))$.   
The nontrivial element $\sigma \in \Z_2$ acts on column vectors
  by 
  $$\sigma \begin{pmatrix} 
x  \\
y  \\
   z  
  \end{pmatrix}   = 
  \begin{pmatrix} 
x  \\
y  \\
 -z  
  \end{pmatrix}  .
  $$
  It follows that the $\Z_2$-decomposition of $P_{3,n-1}$ is the
  obvious horizontal - vertical decomposition: 
  $$P_{3, n-1} = \V_+ \oplus \V_- := P_{2, n-1} \oplus \nu$$
  where  
   $$ Z = \begin{pmatrix} X_{1, 1}  & X_{2, 1} & \ldots  & Y_{n-1, 1} \\
   X_{1, 2}  & X_{2, 2} & \ldots  & Y_{n-1, 2}  \\
0 & 0  & \ldots  & 0    \end{pmatrix}  \in P_{2, n-1} $$
while matrices   in 
 $\nu$ only have nonzero entries in their last row: 
$$ \eta =   \begin{pmatrix} 
0 & 0  & \ldots  & 0 & \ldots & 0  \\
0 & 0  & \ldots  & 0 & \ldots & 0  \\
   z_1   & z_2 & \ldots  & w_1 & \ldots &   w_{n-1} 
  \end{pmatrix}   \in \nu . $$
  The dimension of $\nu$ is $2n-2$. 
  If $Z \in P_{2, n-1}$ is as above with   $J(Z) = \mu = \omega^2 e_3$,
  and if $\eta  \in \nu$ as above, then we compute
  $J(Z + \eta) = \mu + ((\sum w_a X_a - z_a Y_a) \times e_3)$.
  Since the column vectors $X_a, Y_a$ of $Z$  lie in the $\R^2$ perpendicular to $e_3$  
  we see that $J(Z + \eta) = \mu \iff \sum w_a X_a- z_a Y_a = 0 \in \R^2$.
  Since $Z$ has rank 2 this last equation, viewed as a linear equation in $w$ and $z$
  is two linearly independent equations. Thus  $L^{-1}(\mu) \cap (\{Z \} + \nu )$
  is a codimension $2$ linear space of $\nu$.
  This linear space is the minus part in the $\Z_2$ decomposition of    $\ker(dL(Z))$.
  It follows that the dimension of this minus part is  $\ell +1 = 2n-4$.
  
  Finally, we verify that the dimension of the plus part comes out as it must.
  The orbit $\G Z$ is a circle lying in  $L^{-1}(\mu) \cap P_{2,n-1}$
  which in turn has codimension one within $P_{2, n-1}$.
  It follows that   the remainder of $N_Z$, namely $T_Z (\G Z) ^{\perp} \cap \ker(dL(Z)) \cap P_{2,n-1}$
  has codimension two within $P_{2,n-1}$ and hence
  dimesion $s = 4n- 6$.  
  
  QED
  
  \begin{remark}
  \label{2 for so3}
  When we look at the $SO(3)$-reduced space we find that $G_{\mu} = SO(2)$
  but more importantly that action is free everywhere, as mentioned in remark \ref{1 on so3}.
  It follows that the reduced space $L^{-1}(\mu)/SO(2)$ is an everywhere smooth symplectic manifold. 
  \end{remark}

\appendix
 

  \section{Symplectic normal forms}       
 \label{s: symp norm form} 
  
In this appendix we derive the normal form
\eqref{normal form}   for positive semi-definite elements  of $sp(2n-2)$.
We do this by using the  Singular-Value-Decomposition-like factorisation due to Xu \cite{Xu}.
Xu showed that every matrix $Z \in P_{d, 2n-2}$ can be written: 
\[
   Z = Q D T^{-1},
\]
with  $Q \in SO(d)$, $T \in Sp(2n-2)$ and $D \in P_{d, 2n-2}$ a  `permuted diagonal'
of  the following shape: 
\[
D = 
\begin{array}{c@{\;}c@{\;}c@{\;}c@{\;}c@{\;}c@{\;}c}
  &
  \begin{array}{c} p \\ q \\ p \\d '  \end{array} &
  \left(
       \begin{array}{ccc|ccc}
    \Sigma  &  0 & 0 & 0 & 0 & 0 \\
    0 & I & 0 & 0 & 0 & 0   \\
    0  & 0 & 0 & \Sigma & 0 & 0   \\
     0 & 0 & 0 & 0 & 0 & 0 
  \end{array} 
 \right)
\end{array}
 \qquad \text{ with } \Sigma > 0 \text{ diagonal } \]
The  nonzero entries  of $D$ are square matrices of the indicated size,  
 $\Sigma$ is  positive definite diagonal and $I$ the identity.  We set 
$d' = d - (p+q)$.  The   3rd and 6th column  have  
width $n' = n-1 - (p+q)$ and may be absent.  The $2-3$ and $4-6$ block entries  in $D$ which consists of $0$'s are
typically not square matrices.
$D$ encodes all the spectral and normal form information within
  $ L(Z)$ and $G(Z)$. Indeed 
  \[
   L = Z J Z^t  \sim_{O}  \; DJD^t   \in so(d)
\]
and
\[
   G =  Z^t Z \sim_{Sp} \;  D^t D   \in sp(2n-2)\,.
\]
where the symbols `$\sim_{O} $ and `$\sim_{Sp} $' mean that the indicated matrices
are conjugate via  elements from   $O(d)$ and $Sp(2n-2)$.  When $G$ is turned into a   quadratic Hamiltonian then Xu's  normal form yields    
\[
   \lambda = \frac{1}{2} \sum_{j=1}^p \omega_j ^2 (x_j^2 + y_j^2) + \frac{1}{2} \sum_{p < j \le p+q }^q y_j ^2  
   \]
where the $\omega_j ^2$ are the   diagonal entries  of $\Sigma$.  We have
found that this quadratic Hamiltonian representation just given    is the most efficient
way to read out the   information contained in elements of $sp(2n-2)$.

A reordering of  coordinates puts   $DJD^t  \sim L \in so(d)$   into  block-diagonal form  with $p$ non-zero
$2\times 2$ blocks $\begin{pmatrix}
0 & \omega_j^2 \\ -\omega_j^2 & 0 
\end{pmatrix}$.  Precisely the same blocks appear in $J D^t D \sim K$.
These blocks  represent the nonzero   eigenvalue pairs $\pm i \omega_j^2$ of both matrices. 
The remaining blocks of $DJ D^t$
are zeros, and represent  the  eigenvalue
$0$ with multiplicity $d - 2p$.  But 
  $JD^tD \sim K \in sp(2n-2)$ has $q$ Jordan blocks representing generalized $0$ eigenvectors
  which arise as  
  $2\times 2$ blocks 
  of the form  $\begin{pmatrix}
0 & 1\\ 0  & 0 
\end{pmatrix}$.  All   remaining  $n-1-p-q$ diagonal blocks of $JD^t D$  of size $2\times2$  are  zero.

Xu's ``dual pair'' factorisation provides another proof of Theorem~\ref{spectral lemma}, while providing  more detailed information 
about the rank of  $G$. 
\begin{lemma}
Let $p$ be the number of non-zero eigenvalue pairs in $L(Z)$.
Then $p$ is also the number of non-zero eigenvalue pairs in $JG(Z)$.
Let   $q$ be the number of $2\times 2$ Jordan blocks with eigenvalue 0 in $JG(Z)$.
Then the  rank $r$ of  $JG(Z)$ is  $2p + q$ while the rank of $L$ is  $2p.$
The integers $(p,q)$ satisfy the   inequalities  $p+q \le n-1$ $q \le \min( d - 2p, n-1-p)$,
  $p \le d/2$ and   $2p +q  \le min(d, 2n-2)$.  
\end{lemma}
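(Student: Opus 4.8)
The plan is to read everything off Xu's factorization $Z = QDT^{-1}$ (with $Q\in SO(d)$, $T\in Sp(2n-2)$) together with the explicit block forms of $DJD^t$ and $JD^tD$ already recorded in this appendix. First I would verify that the two reductions ``$L\sim_O DJD^t$'' and ``$JG\sim_{Sp}JD^tD$'' are genuine conjugations, so that spectrum, rank and Jordan type transfer without loss. Since $T\in Sp(2n-2)$ satisfies $TJT^t=J$, one has $T^{-1}J(T^{-1})^t=J$ and $J(T^t)^{-1}=TJ$, whence
\begin{equation}
L(Z)=ZJZ^t=Q\,(DJD^t)\,Q^{-1},\qquad JG(Z)=JZ^tZ=T\,(JD^tD)\,T^{-1}.
\end{equation}
So $L(Z)$ is $O(d)$-conjugate to $DJD^t$ and $JG(Z)$ is $Sp(2n-2)$-conjugate to $JD^tD$; in particular both conjugations preserve the nonzero spectrum, recovering the Spectral Lemma and now refining it.

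Next I would extract the eigenvalue data from the block pictures. After the coordinate reordering, $DJD^t$ is block diagonal with $p$ blocks $\begin{pmatrix}0&\omega_j^2\\-\omega_j^2&0\end{pmatrix}$, $\omega_j\neq0$, and zeros otherwise; hence its nonzero eigenvalues are exactly the $p$ pairs $\pm i\omega_j^2$ and $\rank(DJD^t)=2p$, so the same holds for $L(Z)$. Likewise $JD^tD$ is built from the same $p$ blocks $\begin{pmatrix}0&\omega_j^2\\-\omega_j^2&0\end{pmatrix}$ (each of rank $2$), from $q$ Jordan blocks $\begin{pmatrix}0&1\\0&0\end{pmatrix}$ (each of rank $1$), and from $n-1-p-q$ zero $2\times2$ blocks; thus $JD^tD$ has precisely the same $p$ nonzero eigenvalue pairs $\pm i\omega_j^2$, exactly $q$ nontrivial Jordan blocks at $0$, and rank $2p+q$. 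Transferring along the second conjugation shows that $JG(Z)$ has $p$ nonzero eigenvalue pairs, that its number of $2\times2$ Jordan blocks at $0$ is $q$, and that $r=\rank JG(Z)=2p+q$, while $\rank L=2p$. This settles the first four assertions.

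For the inequalities I would combine the dimensions that appear in $D$ with the size of $Z$. The six column blocks of $D$ have widths $p,q,n',p,q,n'$ with $n'\ge0$ and sum $2(p+q+n')=2n-2$, forcing $p+q\le n-1$ and hence $q\le n-1-p$. The rows of $D$ carrying the two copies of $\Sigma$ and the copy of $I$ number $2p+q$ out of $d$, so $2p+q\le d$, giving $p\le d/2$ and $q\le d-2p$. Finally, since $G(Z)=Z^tZ$ with $Z$ a $d\times(2n-2)$ matrix, $2p+q=\rank G(Z)=\rank Z\le\min(d,2n-2)$, which re-derives $2p+q\le d$ and also yields $2p+q\le\min(d,2n-2)$. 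Together these give $q\le\min(d-2p,\ n-1-p)$.

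I expect the argument to be essentially bookkeeping once Xu's normal form is in hand; the only step requiring care is the dimensional accounting for $D$ — in particular keeping straight the factor of two relating the six column blocks to the total width $2n-2$, and confirming the row-block pattern — and the observation that ``$\sim_{Sp}$'' really is conjugation rather than mere congruence, which is exactly what the identity $J(T^t)^{-1}=TJ$ supplies.
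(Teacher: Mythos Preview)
Your proposal is correct and follows exactly the approach the paper intends: the lemma is stated immediately after the description of Xu's factorization and the explicit block forms of $DJD^t$ and $JD^tD$, and is meant to be read off from that data without further argument. Your write-up is in fact more careful than the paper's, since you explicitly verify (via $TJT^t=J\Rightarrow J(T^{-1})^t=TJ$) that the relations $L\sim_O DJD^t$ and $JG\sim_{Sp}JD^tD$ are genuine conjugations rather than mere congruences, which is exactly what is needed to transport Jordan type and rank; the paper asserts this but does not check it.
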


\label{sec:appendixB}

\section{Typical coadjoint orbits}

In this appendix we describe some basic general  properties
of the   coadjoint orbits   associated to the $n$-body problem.
 For the purposes of this appendix and the next  one set 
 $$m = n-1 , \qquad G = Sp(2m, \R), \text{ and } \qquad   \g = sp(2m) \cong sp(2m)^*.$$
 We   identify $\g$ with the vector space of quadratic Hamiltonians $h = h(x_1, \ldots, x_m, y_1, \ldots, y_m)$, 
 with its Lie bracket being their  Poisson bracket.
 When needed we will write $Gram(Z) = Z^t Z$ for 
 the Gram map, which is not to be confused in this appendix and the next, with
 our Lie group.

Write $\mathcal O_{\lambda} = \{ g^* \lambda:  g \in G \}$
for the coadjoint orbit of the quadratic Hamiltonian $\lambda \in \g$.
From general theory $\mathcal O_{\lambda}$   is diffeomorphic, as a $G$-space, to the    homogeneous space $G / G_\lambda$
where $G_{\lambda} = \{g : g^* \lambda = \lambda \}$ is the isotropy subgroup of $\lambda$.
Write
$$\g_{\lambda} = \{ f \in sp(2m):  \{ f,  \lambda \} = 0 \}$$
 for the Lie algebra of $G_{\lambda}$.  The isotropy group $G_{\lambda}$ need not   be connected.
 Write $(G_{\lambda})^0$ for the identity component of $G_{\lambda}$.
 Then   $(G_{\lambda})^0$ is uniquely determined from $\g_{\lambda}$ as the
 connected Lie subgroup of $G$ having Lie algebra $\g_{\lambda}$. 
 The quotient group  
 $G_{\lambda}/ G_{\lambda}^0$ is a finite group.  We will not pretend to understand
 this finite group or the full orbit $G/G_{\lambda}$.
 Rather we will content ourselves with computing
 $dim(\mathcal O_{\lambda})$ and $G_{\lambda} ^0$, which is the same as knowing the orbit
 up to finite cover.  
 Since $dim(G_{\lambda}) = dim(\g_{\lambda})$,    $dim(\mathcal O_{\lambda}) = dim(G) - dim(\g_{\lambda})$ and  $dim(G) = {2m+1 \choose 2} = m(2m+1)$  we get
$$dim(\mathcal O_{\lambda}) = m (2m + 1) - dim(\g_{\lambda}).$$
We have   $dim(\g_{\lambda}) \ge m$
and that $dim(\g_{\lambda}) =m$ for an open dense set of $\lambda$'s.
We call these $\lambda$'s and their orbits ``generic''. 
 Thus the generic orbit has
dimension $2m^2$.    Note that $m$ is also the number of Casimirs.
These generic orbits are characterized by having numerical invariants $(p,q) = (m,0)$
or $(m-1, 1)$ and all   spectral invariants distinct from each other. 
Of these, those  orbits having $p=m$ are closed and are uniquely characterized as
being   level sets of the Casimirs.   Orbits having  $(p,q) = (m-1, 1)$
have orbits with $(p,q) = (m-1, 0)$ in their closure.

Since our interest is in the $n$-body problem we are only interested in 
the orbits lying in $symm_+(2n-2)$, the   subset of  positive semi-definite quadratic Hamiltonians.
See theorem \ref{Poiss reduction}.  
 Recall definition \ref{lem: invariants} and  lemma  \ref{lem: invariants}:   the orbit type of   $\lambda \in symm_+$ 
 is completely determined by
 its  numerical invariants $(p, q)$ and  its $p$ spectral invariants $\omega_j ^2, j =1, \ldots, p$ of $\lambda$ as given by the normal form.
 A  $\lambda$  having these invariants can be brought  into the normal form:
 \beq \lambda = \frac{1}{2} \sum_{j =1} ^p \omega_j ^2 (x_j^2 + y_j ^2) +   \frac{1}{2} \sum_{i = p+1} ^{p+q} x_i ^2.
\Leq{}
 See also equation  \eqref{normal form}.   When $q = 0$ of $(p, q)$ then  the second term is not present.  
 We have that  $p+q \le m= n-1$. 
 The motion space for the bodies of an $n$-body problem lying on $\lambda$'s  orbit
 is $d = 2p +q$ which is the rank of $K$.  The  orbit is closed if and only if $q = 0$.
 The values of the $m$ Casimirs $tr(K^{2\ell})$, $\ell =1, \ldots, m$ on $\lambda$  can be written in terms of its spectral invariants:
 $$tr(K^{2 \ell}) = (-1)^{\ell} \sum_{j=1}^p (\omega_j) ^{2 \ell},  \ell =1, 2, \ldots, m. $$ 
  Our goal then is to describe $dim({\mathcal O}_{\lambda})$ and $G_{\lambda}^0$
  in terms of the   numerical invariants $(p, q)$ and the spectral invariants  $\omega_j ^2$ of  $\lambda \in symm_+(2m)$.

 \begin{case} Generic case: 
 $p = m$ with  the $\omega_j ^2$ distinct. 
 These orbits have maximal possible dimension $2m^2$. 
 Since $q = 0$ they are closed and hence defined by setting the Casimirs to constants. 
 Write $I_j = \frac{1}{2} \omega_j ^2 (x_j ^2 + y_j ^2)$ so that
  $\lambda = \sum_{j =1} ^m  I_j$.  The   $I_j$ form a basis for $\g_{\lambda}$ yielding 
$dim(\g_{\lambda}) = m$ and 
 $dim(\Omu) = 2m ^2$ as claimed. 
 Each $I_j$ generates a circle $\mathbb{S}^1 = U(1)$ which rotates just the   $x_j -y_j$ plane.
 Thus $G_{\lambda} ^0$ is an m-torus, being the  $m$-fold product of circles $U(1)$.  
 This m-dimensional torus $\mathbb{T}^m$ is the maximal torus inside $U(m) \subset Sp(2m)$. 
 This $U(m)$ in turn is the maximal compact subgroup of $Sp(2m)$.  
 The dimension of $\mathbb{T}^m$ is $m$ so  $dim(\Omu) = 2m^2$. 
 \end{case} 
 
 \begin{case} Special case of the generic case: 
 $p = m-1$, $q = 1$,  with  the $\omega_j ^2$ distinct. 
 The orbit is again of maximal dimension $2m^2$ but it is not closed.
 Its closure is the orbit with $(p,q) =(m-1, 0)$ and the same spectral invariants.
Using the same notation as above, we   have the normal form  $\lambda = \sum_{j =1} ^{m-1} I_j + \frac{1}{2} x_m ^2$.
The isotropy algebra again has dimension $m$, being generated by the $I_j$ and $x_m ^2$.
$G_{\lambda} ^0$ is again commutative but now is non-compact, being of  the form of $\mathbb{T}^{m-1} \times \R$,
since $x_m ^2$ generates a shear in the $x_my_m$-plane. The dimension is the same as in the previous case.
\end{case}

 \begin{case} $p=m$ with all of the $\omega_j$ equal.
 The element  $\lambda = \frac{1}{2} \sum (x_j ^2 + y_j ^2)$
 is the momentum map for a $U(m)$
 action on $\R^{2m}$.  To describe this action form  $x_j + \sqrt{-1} y_j = z_j$ and thereby identify
 $\R^{2m}$ with $\C^m$.  Then the action is scalar multiplication by a unit complex number.
 We can think of $\C^m = \C^m \otimes \C^1$  
 in which case one sees this
 $U(1)$ as half of the Howe dual pair $(U(1), U(m))$.  It follows that
 $G_{\lambda} =  U(m)$.  The dimension of $U(m)$ is  $m^2$
leading to $dim(\Omu) = m(m+1)$.  \end{case}

\begin{remark}
The isotropy group $U(m)$ of  case 3 is   the
isotropy group of the almost complex structure $J$ we have been using,  where $J^2 = -I$.
( This   fact is closely related to the equalities  $SO(2m) \cap Sp(2m) = Gl(m, \C) \cap Sp(2m) = U(m)$.)
 It follows   that the coadjoint orbit of example 3 is    the space of almost complex structures
compatible with our symplectic form on $\R^{2m}$.  What is the  differential geometric
importance,  if any, of letting an almost complex structure evolve ``as if it were''
a reduced point in an $n$-body phase space? We don't know. 
\end{remark} 

 \begin{case} $p = 0$ and $q =m$. Then $\lambda = \frac{1}{2} \sum_{j=1} ^m  y_j ^2$
 is the Hamiltonian for a free particle moving in $\R^m$.  Think of $\lambda$
 as a full rank quadratic form on the $\R^m$ with coordinates $y_i$.  Then
 the associated group  $O(m)$ leaves $\lambda$  invariant and 
 embeds in $sp(2m)$ diagonally, acting in the same way on both $x_i$ and $y_i$. 
 This group has momentum map the usual angular momentum   with components $f_{ij} = x_i y_j - y_i x_j, 1 \le i, j \le m$.
 Of course  $\{\lambda, f_{ij} \} = 0$.  In addition to $f_{ij} \in \g_{\lambda}$,  any    polynomial in the momenta $y_i$
 commutes with $\lambda$.  Among these polynomials in $y_j$  the  
 quadratic ones form the   vector space  $symm(m)$ of quadratic forms on $\R^m$.
 The $f_{ij}$ do not Poisson commute with the elements of $symm(m)$  
  and a moment's thought reveals that their Poisson bracket relations   
   arise out of  the action of $SO(m)$ on $symm(m)$.
  To summarize then,  the connected component of $G_{\lambda}$ is 
 $G_{\lambda} ^{0} = SO(m) \ltimes symm(m)$ where 
the  semi-direct product arises from the action of $SO(m)$ on $symm(m)$. 
 \end{case}

 \begin{case} $p = 0$ but $0 < q < m$. Then $\lambda = \frac{1}{2} \sum_{j=1} ^q  y_j ^2$
 is the Hamiltonian for a free particle moving in $\R^q \subset \R^m$ with all the other variables $x_{\mu}, y_{\mu},  \mu > q$
 irrelevant.     Split  up the canonical  coordinates of $\R^{2m}$ into $x_i, y_i, 1 \le i \le q$
 and the complementary set $x_{\mu}, y_{\mu},  q < \mu \le m$.  The binomials in $x_i, y_i$ commuting  with $\lambda$
 fit together precisely as in the previous case to form the Lie algebra $so(q) \ltimes symm(q)$.
 Every  binomial  in $x_{\mu}, y_{\mu}$    commutes with $\lambda$ and together these span   the Lie algebra $sp(2(m-q))$.
 Finally we have mixed terms.  Any quadratic polynomial of the form $y_i x_{\mu}$ or $y_i y_{\mu}$ Poisson commutes with $\lambda$.
 These mixed polynomials form the symplectic 
  vector space $P_{q, m-q} = \R^q \otimes \R^{2(m-q)}$. We have now exhausted the commutator algebra $\g_{\lambda}$.
 Putting  these pieces together we see that $\g_{\lambda}$ is  the Lie algebra of the group
 $(SO(q) \times Sp(2m-2q) )\ltimes (symm(q) \times P_{q, m-q})$ which is  the identity component of the isotropy group of $\lambda$.    Here the semi-direct product structure
 $\ltimes$ is given by having  
 $(O(q), Sp(2m-2q))$ act on   $P_{q, m-q}$ just 
 like our guiding dual pair $(O(d), Sp(2n-2))$ acts on $P_{d, n-1}$, only with   dimensions shifted
 to $(q, m-q)$  from $(d, n-1)$, and then restricting this action to   $SO(q) \times Sp(2m-2q)$.  The $SO(q)$ factor alone acts on $symm(q)$.
 \end{case}

\section{Coadjoint orbits in low dimensions}

\subsection{The 2-body problem}

When $n =2$   the relevant Lie-Poisson structure occurs on the
dual of the three-dimensional Lie algebra    $sp(2) = sl(2, \R)$.  
We use the Killing form to identify the Lie algebra with its dual.
  The Killing form is a non-degenerate symmetric quadratic
form having Lorentzian signature $(2,1)$.  The  squared (Lorentzian)
length  for this form equals the  single Casimir  function $C =X^2 + Y^2 -Z^2$ 
in appropriate linear coordinates on $sl(2, \R)$ and equals the   square of the angular momentum.    The 2-body Poisson reduced space
forms the closure  of the positive light cone.    The interior of the cone is foliated by the level sets
$C = const. > 0$ each of which is a symplectic leaf sitting inside
$\R^{2,1}$.  The points of the interior have $(p,q) = (1,0)$.
The boundary of the cone, minus the cone point $0$,  corresponds  to points with 
$(p, q) = (0,1)$ and also to the locus $C = 0$ where the   angular momentum  is zero.  These 
boundary points form a single orbit with normal form $\frac{1}{2} x^2$ and are where the 
reduced    
collinear 2-body motions take place.

\begin{figure}[h]
\scalebox{0.4}{\includegraphics{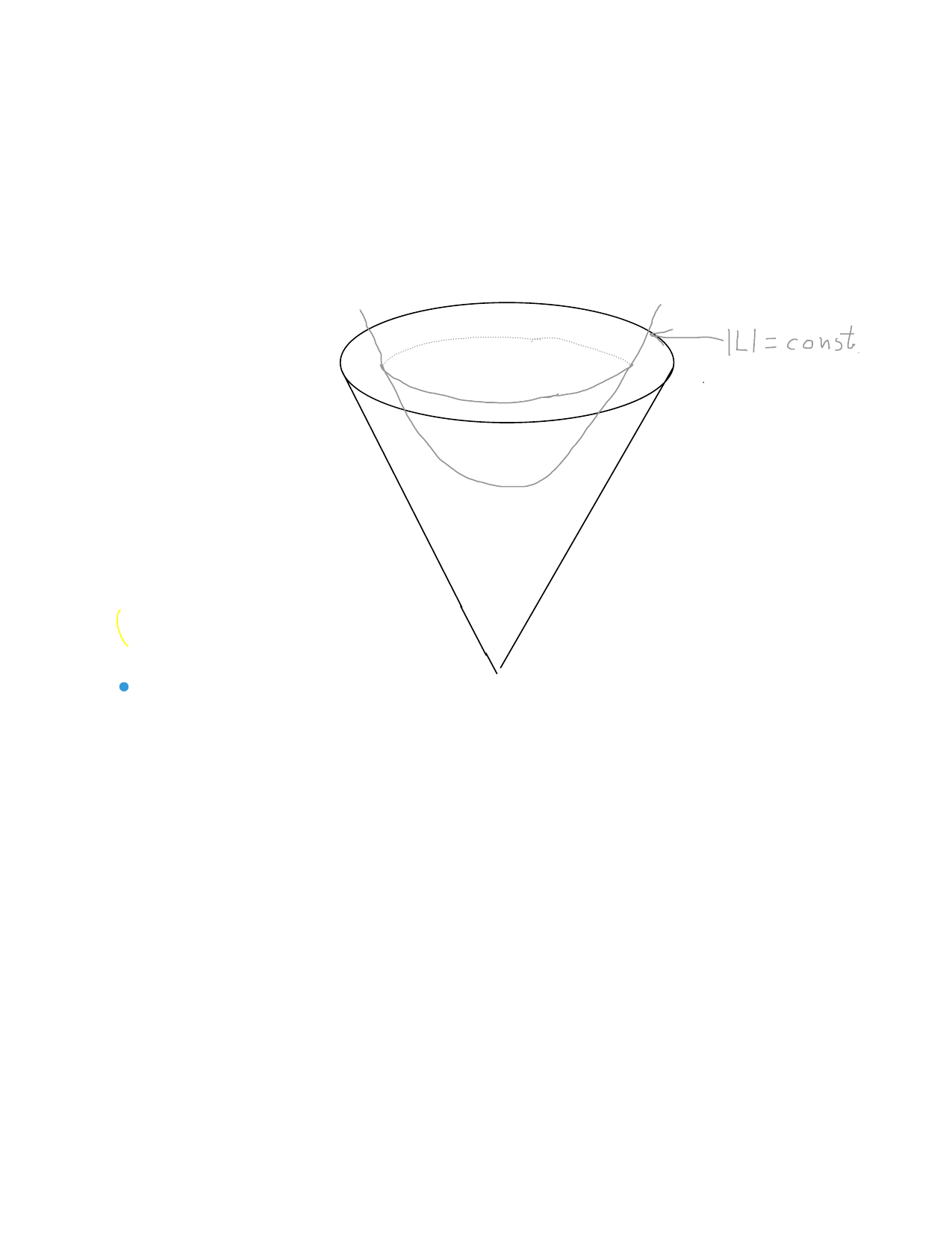}}
\caption{The $n=2$ Poisson structure with a symplectic leaf $|L| = const$.}  \label{cone1}
\end{figure}

\subsection{The 3-body problem}  The coadjoint orbits for $n =3$  live in the $10$ dimensional dual of the Lie algebra $sp(4)$.
The dimension
of the generic orbit is $8$. 
In the   table below  we   list all    positive semi-definite nonzero  coadjoint orbit types  arising when $n =3$.
The top row lists data for this  generic coadjoint orbit type.
 In each row we list an orbit type with its   dimension and the connected component of its isotropy group.   
In the bottom row $Heis_1$ denotes the 3-dimensional Heisenberg group.
This is the simply connected nilpotent Lie group with generators $X, Y$ and brackets $[X, Y] = Z$.
We will say a few words regarding how the Heisenberg group  arises.

{\bf Table for $n = 3$ below.}
 \[  
\begin{array}{ c|c|c|c|c}
  p  & q & (G_\lambda )^0   & dim (\Omu) & d\\ \hline
  2 \; (nondeg.)& 0 & SO(2)\times SO(2)  & 8 & 4 \\
       2  \; (deg.)   & 0 & U(2)   & 6 & 4 \\
    1 & 1 & SO(2) \times \R   & 8 & 3\\
    1 & 0 & SO(2) \times Sp(2)   & 6 & 2 \\
   0 & 2 & SO(2) \ltimes symm(2)   & 6 & 2\\
  0 & 1 & Sp(2) \ltimes (Heis_1)   & 4 & 1 \\ 
   \end{array}
\]

We explain the last row and in the process outline our methods of computation.
The normal form for this orbit is $\lambda = \frac{1}{2} y_1 ^2$.  It Poisson commutes with
itself and any function $f = f(x_2, y_2)$.  It also commutes with the functions of the
form $y_1 g (x_2, y_2)$.  The quadratic functions of the first type form the Lie algebra
$sp(2)$.  If we want functions of the second type to be quadratic  then $g$ must be linear.
 A basis for these functions of the second type is $X = y_1 x_2$ and $Y = y_1 y_2$.
We have $\{ X, Y \} = y_1 ^2$.  Set $Z = y_1 ^2$.  Replacing Poisson brackets with Lie brackets this means that we 
 $[X, Y] = Z$ while $[X, Z] = [Y, Z] = 0$ since $Z$ is  in the center of $\g_{\lambda}$.
 The Heisenberg algebra is the algebra with these commutation relations.
 Integrate up to get the Heisenberg group $Heis_1$.  
 
 \begin{remark}
 \label{rem:Heis}
 The computation just described generalizes without difficulty
 to the case $(p, q)  = (0,1)$ for any number $n$ of bodies.
 This orbit represents the zero angular momentum $n$-body problem.
 Its normal form is given by the   rank one element $\lambda = \frac{1}{2} y_1 ^2 \in sp(2m)$,
 The isotropy of $\lambda$ is  $Sp(2(m-1)) \ltimes Heis_{m-1}$ where $Heis_{m-1}$
 is the usual Heisenberg algebra of $\R^{2m -2} \oplus \R$
 associated to the symplectic form on $\R^{2m-2}$.
 It is interesting to note that this isotropy group  is precisely the group of automorphisms
 of this  Heisenberg algebra.
 
 \end{remark}

\subsection{The 4-body problem} 
The  Lie algebra for $n = 4$ is  $\g = sp(6)$ which has dimension $21$.  Its generic coadjoint  orbit has dimension $18$.
In the   table below  we list all orbits  for which  $d =2p +q \le 3$.  Recall that  this  $d$ is the dimension in which the bodies move
when we think of the orbit as being the principal stratum for the symplectic reduced space for $4$ bodies.   The generic $18$-dimensional
 coadjoint  orbits have  $d= 6$ and   corresponds to the reduction of the  4-body problem in 6 dimensions, reduced at a   generic rank 6 value of  angular momentum.
 We have ignored it in our table,  along with all other orbits having $d > 3$. 

{\bf Table for $n = 4$ below.}
  \[
\begin{array}{ c|c|c|c|c}
  p  & q & (G_\lambda )^0   & dim (\Omu) & d\\ \hline
      1 & 1 & SO(2) \times Sp(2) \ltimes \R^3   & 14 & 3\\
        1 & 0 & SO(2) \times Sp(4)   & 10 & 2 \\
    0 & 3 & SO(3) \ltimes symm(3)   & 12 & 3 \\
   0 & 2 & * \text{ see text below } * & 10 & 2\\
         0 & 1 &  Sp(2) \ltimes (Heis_2) & 6 & 1\\
 \end{array}
\]
The   $p = 0$, $q=2$ entry is a special case of 
`case 5' at the end of Appendix C.  Plugging in the
integers, we find that  the identity component of the isotropy group of this
$\lambda$   is
$(G_\lambda )^0 = H \ltimes (symm(2) \times P_{2,1})$. Here  $H = SO(2) \times Sp(2)$ and $P_{2,1} = \R^2 \otimes \R^2$.
In the semi-direct product $H$ acts on 
$P_{2,1}$    in our standard `dual pair' manner, $SO(2)$ acting on the first factor
$\R^2$ and $Sp(2)$ on the second $\R^2$ factor.     Only   $SO(2)$   acts on $symm(2)$. 

Regarding the $p = 0, q=1$ entry, see remark~\ref{rem:Heis} above.

\end{document}